    \definecolor{DarkRed}{rgb}{0.368,0.097,0.078}
\definecolor{DarkBlue}{rgb}{0.2,0.2,0.6}
\declaretheoremstyle[
	    spaceabove=\topsep, 
	    spacebelow=\topsep, 
	    headfont=\normalfont\bfseries,
	    bodyfont=\normalfont\itshape,
	    notefont=\normalfont\bfseries,
	    notebraces={(}{)},
	    postheadspace=0.5em, 
	    headpunct={},
	    postfoothook=\noindent\ignorespaces
    ]{theorem}
\declaretheorem[style=theorem,numberwithin=section]{theorem}
\declaretheoremstyle[
	    spaceabove=\topsep, 
	    spacebelow=\topsep, 
	    headfont=\normalfont\bfseries,
	    bodyfont=\normalfont,
	    notefont=\normalfont\bfseries,
	    notebraces={(}{)},
	    postheadspace=0.5em, 
	    headpunct={},
	    postfoothook=\noindent\ignorespaces
    ]{definition}
\declaretheoremstyle[
        spaceabove=\topsep, 
        spacebelow=\topsep, 
        headfont=\normalfont\bfseries,
        bodyfont=\normalfont,
        notefont=\normalfont\bfseries,
        notebraces={}{},
        postheadspace=0.5em, 
        qed=$\blacksquare$, 
        headpunct={},
        postfoothook=\noindent\ignorespaces
    ]{proofstyle}
\declaretheorem[style=proofstyle,numbered=no,name=Proof]{proof}
\declaretheorem[style=theorem,sibling=theorem,name=Corollary]{corollary}
\declaretheorem[style=theorem,sibling=theorem,name=Conjecture]{conjecture}
\declaretheorem[style=theorem,numbered=no,name=Theorem]{theorem*}
\declaretheorem[style=theorem,numbered=no,name=Lemma]{lemma*}
\declaretheorem[style=theorem,numbered=no,name=Corollary]{corollary*}
\declaretheorem[style=theorem,numbered=no,name=Proposition]{proposition*}
\declaretheorem[style=theorem,numbered=no,name=Claim]{claim*}
\declaretheorem[style=theorem,numbered=no,name=Fact]{fact*}
\declaretheorem[style=theorem,numbered=no,name=Observation]{observation*}
\declaretheorem[style=theorem,numbered=no,name=Conjecture]{conjecture*}
\declaretheorem[style=definition,sibling=theorem,name=Definition]{definition}
\declaretheorem[style=definition,numbered=no,name=Definition]{definition*}
\declaretheorem[style=definition,numbered=no,name=Remark]{remark*}
\declaretheorem[style=definition,numbered=no,name=Example]{example*}
\declaretheorem[style=definition,numbered=no,name=Question]{question*}
\newcommand{\lr}[1]{\mathopen{}\left(#1\right)}
\newcommand{\norm}[1]{\|#1\|}
\newcommand{\idan}[1]{{\color{purple} Idan: {#1}}}
\newcommand{\XG}[1]{{\color{teal} [Xing: #1]}}
\newcommand{\onenorm}[1]{\norm{#1}_1}
\newcommand{\twonorm}[1]{\norm{#1}_2}
\newcommand{\inftynorm}[1]{\norm{#1}_\infty}
\newcommand{\fnorm}[1]{{\norm{#1}}_F}
\newcommand{\OPT}{\mathrm{OPT}}
\newcommand{\R}{\mathbb{R}}
\newcommand{\numSAT}{\# \mathrm{SAT}}
\newcommand{\numLIN}{\# \mathrm{LIN}}
\providecommand{\expect}[2]{\ensuremath{\ifthenelse{\equal{#1}{}}{\mathbb{E}}{\mathbb{E}_{#1}}\!\left[#2\right]}\xspace}
\providecommand{\expectUnder}[2]{\ensuremath{\ifthenelse{\equal{#1}{}}{\mathbb{E}}{\underset{#1}{\mathbb{E}}}\!\left[#2\right]}\xspace}
\providecommand{\prob}[2]{\ensuremath{\ifthenelse{\equal{#1}{}}{\Pr}{\mathbb{P}_{#1}}\!\left[#2\right]}\xspace}
\algrenewcommand\algorithmicrequire{\textbf{Input:}}
\algrenewcommand\algorithmicensure{\textbf{Output:}}
\newcommand{\ksat}{$k\text{-SAT }$}
\newcommand{\eksat}{E$k\text{-SAT }$}
\newcommand{\threesat}{$3\text{-SAT }$}
\newcommand{\maxksat}{$\text{MAX-}k\text{-SAT }$}
\newcommand{\maxthreesat}{$\text{MAX-}3\text{-SAT }$}
\newcommand{\maxtwosat}{$\text{MAX-}2\text{-SAT }$}
\newcommand{\maxeksat}{$\text{MAX-E}k\text{-SAT }$}
\renewcommand{\idan}[1]{}
\renewcommand{\XG}[1]{}
\title{Learning-Augmented Algorithms for Boolean Satisfiability}
\author{
Idan Attias\thanks{
Institute for Data, Econometrics, Algorithms, and Learning (IDEAL), hosted by the
University of Illinois at Chicago and Toyota Technological Institute at Chicago; \texttt{idanattias88@gmail.com}.}
\and Xing Gao\thanks{University of Illinois at Chicago; \texttt{xgao53@uic.edu}}
\and Lev Reyzin\thanks{University of Illinois at Chicago; \texttt{lreyzin@uic.edu}}
}
\date{}
\begin{document}
\maketitle
    
\begin{abstract}
Learning-augmented algorithms are a prominent recent development in beyond worst-case analysis. In this framework, a problem instance is provided with a prediction (``advice'') from a machine-learning oracle, which provides partial information about an optimal solution, and the goal is to design algorithms that leverage this advice to improve worst-case performance. We study the classic Boolean satisfiability (SAT) decision and optimization problems within this framework using two forms of advice. ``Subset advice" provides a random $\epsilon$ fraction of the variables from an optimal assignment, whereas ``label advice" provides noisy predictions for all variables in an optimal assignment.

For the decision problem \ksat, by using the subset advice we accelerate the exponential running time of the PPSZ family of algorithms due to Paturi, Pudlak, Saks and Zane \cite{paturi2005improved}, which currently represent the state of the art in the worst case. We accelerate the running time by a multiplicative factor of $2^{-c}$ in the base of the exponent, where $c$ is a function of $\epsilon$ and $k$.
In particular, for \threesat the base constant of the exponent becomes $2^{\left(\frac{\epsilon}{1-\epsilon}+2\ln{(2-2\epsilon)}-1+o(1)\right)}$ for $\epsilon<\frac{1}{2}$ comparing to $2^{2\ln{2} - 1 +o(1)}$ without advice, and when $\epsilon\geq 1/2$, the running time becomes sub-exponential.
For the optimization problem, we show how to incorporate subset advice in a black-box fashion with any $\alpha$-approximation algorithm, improving the approximation ratio to $\alpha + (1 - \alpha)\epsilon$.
Specifically, we achieve approximations of $0.94 + \Omega(\epsilon)$ for \maxtwosat, $7/8 + \Omega(\epsilon)$ for \maxthreesat, and $0.79 + \Omega(\epsilon)$ for MAX-SAT. Moreover, for label advice, we obtain near-optimal approximation for instances with large average degree, thereby generalizing recent results on MAX-CUT and MAX-$2$-LIN \cite{ghoshal2024constraint}.
\end{abstract}

\section{Introduction}

The Boolean Satisfiability problem (SAT) is a cornerstone of computational complexity theory and algorithmic research. Given a Boolean formula over variables that can take values true or false, the task is to decide whether there exists an assignment of these variables that makes the formula evaluate to true. SAT is most commonly studied in its conjunctive normal form (CNF), where the formula is represented as a conjunction of clauses, each clause being a disjunction of literals (a variable or its negation). The restricted case where each clause contains at most $k$ literals is known as \ksat. SAT was the first problem proven to be NP-complete, via the Cook–Levin theorem \cite{cook1971complexity,Levin73}, with $3\text{-SAT}$ often serving as its canonical example \cite{karp1972reducibility}. This foundational result implies that any problem in NP can be efficiently reduced to SAT, making it a central object in the study of computational intractability, reductions, and practical solving techniques.

Since polynomial-time algorithms that solve all instances of SAT optimally are unlikely to exist unless $\text{P}\neq\text{NP}$, alternative approaches are necessary. One approach is to abandon the requirement of polynomial runtime and seek exponential-time algorithms that, while still exponential in the worst case, are faster than exhaustive search. For $k \geq 3$, let $c_k \ge 1$ be a constant such that \ksat can be solved in time $O^*(c_k)^n$, where $n$ is the number of variables in the given \ksat instance and $O^*(\cdot)^n$ hides polynomial factors. A well-known conjecture proposed by \cite{impagliazzo2001complexity}, called the Exponential Time Hypothesis (ETH), posits that \threesat cannot be solved in sub-exponential time, i.e., $c_3 > 1$. A positive answer to this conjecture would imply that $\text{P} \neq \text{NP}$. A stronger conjecture proposed by \cite{calabro2009complexity}, known as the Strong Exponential Time Hypothesis (SETH), claims that
$\lim_{k \to \infty} c_k = 2$.

Another canonical variant of the Boolean satisfiability problem is its optimization counterpart, MAX-SAT, where the objective is to determine the maximum number of clauses that can be satisfied by any assignment. Since MAX-SAT generalizes the decision problem SAT, it is also computationally intractable unless $\text{P} = \text{NP}$. One approach to tackling this problem is to design polynomial-time algorithms that find approximate solutions for all instances. These are known as approximation algorithms. However, it is computationally hard to compute an approximate solution that satisfies a number of clauses arbitrarily close to the optimal. More precisely, MAX-SAT is APX-complete, indicating that it does not admit a polynomial-time approximation scheme (PTAS) unless $\text{P} = \text{NP}$ \cite{feige1995approximating,arora1998probabilistic,arora1998proof,haastad2001some}. Therefore, there has been extensive study to find the best possible approximation factors for this problem, particularly for \maxksat, the restricted case of MAX-SAT where each clause contains at most $k$ literals.

There has been extensive research on SAT beyond worst-case performance, including random (average-case) and semirandom models (\cite{selman1996generating}, \cite[Section 9]{roughgarden2021beyond}), smoothed analysis \cite{feige2007refuting}, stability analysis~\cite[Section 2]{KunR14}, parameterized complexity \cite[Section 2]{roughgarden2021beyond}, and SAT-solver heuristics such as DPLL and CDCL (\cite[Section 25]{roughgarden2021beyond}).
We focus on the emerging paradigm of \emph{learning-augmented algorithms} \cite{mitzenmacher2022algorithms}, also known as “algorithms with predictions”. In this paradigm, a machine learning method provides a prediction (or “advice”) about the input or the optimal solution, and the algorithm uses this prediction to improve its performance, with guarantees that depend on the accuracy of the prediction. This approach has been applied to numerous algorithmic tasks, particularly NP problems such as MAX-CUT, Max-$k$-Lin, Independent Set and Clustering, among others \cite{ghoshal2024constraint, cohen2024learning, braverman2024learning, bampis2024parsimonious,dong2024learning,bampis2025polynomial,ergunlearning,gamlath2022approximate,nguyen2022improved}. This paper focuses on studying the canonical NP-complete problem of SAT.

\subsection{Problem Formulation}

Let $\phi(x)=C_1 \land C_2 \land \dots \land C_m$ be a Boolean formula in conjunctive normal form (CNF) over variables $x = x_1, x_2, \dots, x_n$, consisting of $m$ clauses $C_1, C_2, \dots, C_m$.%

\begin{definition}[SAT]
    The Satisfiability Problem asks whether there exists a truth assignment $\sigma : \{x_1, \dots, x_n\} \to \{0,1\}$ to a CNF Boolean $\phi$ formula such that $\phi(x)|_{\sigma} = 1$, i.e., all clauses $C_i$ are satisfied under $\sigma$.
    When each clause in $\phi$ contains at most $k$ literals, the problem is referred to as \ksat. If every clause contains exactly $k$ literals, it is known as \eksat.
\end{definition}

\begin{definition}[MAX-SAT]
     Given a CNF Boolean formula $\phi$, the MAX-SAT problem seeks a truth assignment $\sigma : \{x_1, \dots, x_n\} \to \{0,1\}$ that maximizes the number of clauses $C_i$ for which $C_i|{_\sigma} = 1$, i.e., the number of clauses satisfied by $\sigma$.
    When each clause in $\phi$ contains at most $k$ literals, the problem is referred to as \maxksat. If every clause contains exactly $k$ literals, it is known as \maxeksat.
\end{definition}

Let $x^* = \{x^*_1, \ldots, x^*_n\}$ be a fixed optimal solution to either SAT or MAX-SAT. We consider two models of advice: one provides full certainty on a subset of the values in $x^*$, and the other offers noisy information about all values in $x^*$.

\begin{definition}[Subset Advice]
The subset advice consists of a random subset of indices $S \subset \{1, \ldots, n\}$ along with the ground-truth assignment $x^*$ restricted to these indices $\{x_i^*\}_{i \in S}$, where each $i$ is included in $S$ independently with probability $\epsilon$ for all $i = 1, \ldots, n$.
\end{definition}

\begin{definition}[Label Advice]
The label advice is an assignment $\tilde{x}\in \{0, 1\}^n$ that contains noise relative to a ground-truth optimal assignment $x^*$. For each $i = 1, \ldots, n$, independently, we have
\[
\tilde{x_i} = \begin{cases}
    x_i^* & \text{ with probability } \frac{1+\epsilon}{2},\\
    1-x_i^* & \text{ with probability } \frac{1-\epsilon}{2}.
\end{cases}
\]
\end{definition}

It is important to note that the randomness in the label advice is sampled once and fixed (i.e., the oracle returns the same answer if queried multiple times), which is standard in the literature on learning-augmented algorithms. Otherwise, one could trivially boost the algorithm’s probability of success. See the discussion of persistent vs. non-persistent noise in \cite{braverman2024learning}.

Moreover, as noted by \cite{ghoshal2024constraint}, the subset advice model is stronger than the label advice model, since label advice $\tilde{x}$ can be simulated given subset advice.

\subsection{Our Contributions}
Under the learning-augmented framework, we study the SAT and MAX-SAT problems with advice. Here, we highlight our contributions and provide a road map of the paper.

In Section \ref{sec: decision problem}, we study the decision problem of \ksat with advice and improve the running time of state-of-the-art algorithms. Specifically, we incorporate subset advice into a family of algorithms known as the PPZ (Section \ref{sec: subset ppz}) and PPSZ (Section \ref{sec: subset ppsz}). As both algorithms run in exponential time in the worst case, we present our results as improvements on the base constant of the exponent, summarized in Table \ref{table:decision}.
For PPZ, we improve the running time from $2^{(1-\frac{1}{k})}$ to $2^{(1-\frac{1}{k}\frac{1-\epsilon^{k}}{1-\epsilon})(1-\epsilon)}$, and for PPSZ we improve the base of the exponent by a multiplicative factor of $2^{-c(\epsilon,k)}$, where $c(\epsilon,k)>0$. In particular, for \threesat the base constant of the exponent becomes $2^{\left(\frac{\epsilon}{1-\epsilon}+2\ln{(2-2\epsilon)}-1+o(1)\right)}$ for $\epsilon<\frac{1}{2}$ comparing to $2^{2\ln{2} - 1 +o(1)}$ without advice, and when $\epsilon\geq 1/2$, the running time becomes sub-exponential.
We also provide a hardness result for \threesat with subset advice under ETH (Section \ref{sec: subset hardness decision}).

In Section \ref{sec: optimization problem}, we study the optimization problem of MAX-SAT and its variants \maxksat, aiming to improve the approximation factors of polynomial-time algorithms. With subset advice (Section \ref{sec: subset maxsat}), we show that any $\alpha$-approximation algorithm for a variant of MAX-SAT can be turned into an approximation of $\alpha+(1-\alpha)\epsilon$ by incorporating the subset advice in a black-box fashion. To complement this result, we establish hardness of approximation for \maxthreesat with subset advice, assuming Gap-ETH (Section \ref{sec: subset hardness optimization}). With label advice, we focus specifically on the \maxtwosat problem, extending the quadratic programming approach for MAX-$2$-LIN from \cite{ghoshal2024constraint} (Section \ref{sec: label max2sat}) to obtain a near-optimal approximation for instances whose average degree exceeds a threshold depending only on the amount of advice.

\paragraph{Comparison to prior work}
Under the same subset and label advice models, the recent works of \cite{ghoshal2024constraint,cohen2024learning} both study closely related problems: MAX-CUT, MAX-$k$-LIN, and MAX-$2$-CSP. In particular, \cite{ghoshal2024constraint} studies MAX-$k$-LIN for $k = 2, 3, 4$, which includes MAX-CUT as a special case. For their positive results, they design algorithms in the weaker label advice model and give near-optimal solutions to MAX-$2$-LIN, under the assumption that the instance has large average degree. For their negative results, they show conditional hardness for MAX-$3$-LIN and MAX-$4$-LIN in the stronger subset advice model. The work of \cite{cohen2024learning} provides positive results in both advice models. With label advice, they achieve an $\alpha_{GW} + \Omega(\epsilon^4)$ approximation for MAX-CUT and MAX-$2$-CSP based on a notion of wide'' and narrow'' graphs and their respective properties. With subset advice, they achieve $\alpha_{GW} + \Omega(\epsilon^2)$ and $\alpha_{RT} + \Omega(\epsilon)$ approximations for MAX-CUT\footnote{$\alpha_{GW} = 0.878$ \cite{goemans1994879}, $\alpha_{RT} = 0.858$ \cite{raghavendra2012approximating}.}. Both of these works study the optimization problem, but not the decision problem.

\subsection{Related Work}

\paragraph{Learning-Augmented Algorithms} 
The idea of using additional information, such as a prediction about the future or a suggestion about the solution, to improve an algorithm’s performance originated in the field of online algorithms \cite{devanur2009adwords,vee2010optimal}.
The formal framework was introduced by Lykouris and Vassilvitskii \cite{lykouris2021competitive}, who defined the key notions of consistency and robustness for evaluating algorithm performance. Consistency refers to improved performance when the predictions are accurate, while robustness ensures that the algorithm performs comparably to a standard, prediction-free algorithm even when the predictions are unreliable.
Numerous algorithmic problems have been studied in this framework, including caching, paging, the ski rental problem, online bipartite matching, scheduling, load balancing, and online facility location \cite{purohit2018improving,mitzenmacher2019scheduling,lattanzi2020online,rohatgi2020near,wei2020optimal,lykouris2021competitive,antoniadis2023paging}.
For broader context, see the survey by Mitzenmacher and Vassilvitskii \cite{mitzenmacher2022algorithms} and the online database of papers in the field \cite{ALP2023}.

\paragraph{SAT} Improvements over exhaustive search in the worst case were achieved by a family of deterministic algorithms based on branching \cite{monien1985solving,schiermeyer1993solving,rodovsek1996new,kullmann1999new}. Another family of algorithms is based on local search, initiated by the randomized algorithm of Sch\"oning
 \cite{schoning1999probabilistic}, and later improved (and in some cases derandomized) by subsequent works \cite{hofmeister2002probabilistic,baumer2003improving,iwama2004improved,moser2011full,liu2018chain}. A third family of randomized algorithms is based on random restrictions, initiated by Paturi, Pudlak and Zane \cite{paturi1997satisfiability} and Paturi, Pudl\'ak, Saks, and Zane \cite{paturi2005improved} algorithms. Hertli \cite{hertli20143,hertli2014breaking} improved the analysis of PPSZ, which was later simplified by Scheder and Steinberger \cite{scheder2017ppsz} and slightly improved by Qin and Huang \cite{qin2020improvement}. A variant of PPSZ, named biased PPSZ, was introduced by Hansen, Kaplan, Zamir, and Zwick \cite{hansen2019faster}, and an improved analysis of PPSZ by Scheder \cite{scheder2024ppsz} currently represents the state of the art, with $1.307^n$ for \threesat.

 \paragraph{MAX-SAT} Goemans and Williamson~\cite{goemans1994879} achieved a significant breakthrough by introducing a semidefinite programming (SDP) relaxation combined with randomized rounding, resulting in a $0.878$-approximation algorithm for \maxtwosat. Building upon this, Feige and Goemans~\cite{feige1995approximating} improved the approximation ratio to $0.931$ by applying a rotation method prior to randomized rounding. Zwick~\cite{zwick2000analyzing} further refined the analysis of these algorithms. Subsequently, Matsui and Matuura~\cite{matuura0} improved the approximation ratio to $0.935$ by employing a skewed distribution during the rounding phase. Finally, Lewin, Livnat, and Zwick~\cite{lewin2002improved} combined these techniques (SDP relaxation, rotation, and skewed distribution rounding) to develop an algorithm achieving a $0.940$-approximation for \maxtwosat.
Under the Unique Games Conjecture (UGC), this $0.940$-approximation is proven to be optimal, indicating that no polynomial-time algorithm can achieve a better approximation ratio for \maxtwosat unless the UGC is false~\cite{brakensiek2024tight,austrin2007balanced}. Assuming only that $\text{P} \neq \text{NP}$, the best achievable approximation ratio is approximately $0.954$, as established by H$\mathring{\text{a}}$stad~\cite{haastad2001some}.

A semidefinite programming relaxation technique was also applied to \maxthreesat, where Karloff and Zwick~\cite{karloff19977} achieved a $7/8$-approximation algorithm\footnote{This guarantee holds for satisfiable instances, and there is strong evidence suggesting the algorithm achieves the same approximation ratio for unsatisfiable instances as well.}. For the case where each clause contains exactly three literals (\maxeksat), Johnson~\cite{johnson1973approximation} showed that a simple random assignment achieves a $7/8$-approximation as well. H$\mathring{\text{a}}$stad~\cite{haastad2001some} later proved that this is optimal, showing that no polynomial-time algorithm can achieve a better approximation ratio unless $\text{P} = \text{NP}$. For the general MAX-SAT problem, the current best-known approximation factor is $0.796$, obtained by Avidor, Berkovitch, and Zwick~\cite{avidor2005improved}.

\section{Improvements on the Running Time of the Decision Problem}\label{sec: decision problem}
In this section, we show how subset advice can improve the running time for \ksat using algorithms in the family of ``random restriction algorithms'' \cite{paturi1997satisfiability,paturi2005improved,hansen2019faster,scheder2017ppsz}, initiated by the influential PPZ algorithm by Paturi, Pudl\'ak, and Saks. 
We start with the relatively straightforward PPZ algorithm \cite{paturi1997satisfiability}, and then dive into the more involved PPSZ algorithm \cite{paturi2005improved,hertli20143}. A summary of existing results and our improvements can be found in Table \ref{table:decision}.

We consider the Unique-\ksat problem (i.e., instances with at most one satisfying
assignment) for simplicity. According to \cite{hertli20143,scheder2017ppsz,scheder2024ppsz}, Unique-\ksat bounds can be lifted to general \ksat with multiple satisfying assignments. In particular, the generalization from \cite{hertli20143} use the results on Unique-\ksat in a black-box fashion and show that the same bounds hold for \ksat, therefore we analyze the improvements on Unique-\ksat, and our results apply to \ksat as well. (Cf. Theorems 1, 2, 7 \cite{hertli20143} and Theorem 11 \cite{scheder2017ppsz}.) 

\begin{table}[H]
\begin{center}
\setlength{\arrayrulewidth}{0.2mm} 
\setlength{\tabcolsep}{2.5pt} 
\def\arraystretch{1.5} 
\begin{tabular}{|c|c|c|}
\hline
\multicolumn{3}{c}{\textsc{Exact exponential-time algorithms for $k$-SAT}} \\
\hline
\hline
& Original & Subset Advice \\
\hline
\rule{0pt}{27pt} PPZ 
&\makecell{$c_k=2^{1-1/k}$\\$c_3\approx 1.587$ \\ \cite{paturi1997satisfiability}} 
&\makecell{$c_k'\le c_k^{(1-\epsilon)}$\\ \\ \ [Theorem \ref{thm: subset PPZ}] } 
\\
\hline
\rule{0pt}{33pt} PPSZ 
&\makecell{$c_k=2^{1-R_k+o(1)}$\\$c_3\approx 1.308$ \\ \cite{paturi2005improved}} 
&\makecell{for $\epsilon<\frac{k-2}{k-1}$: $\ c_k' = c_k\cdot 2^{-\epsilon_k}$, where $\epsilon>\epsilon_k>0$\\ for $\epsilon\ge\frac{k-2}{k-1}$: $\ c_k' = 2^{o(1)} \ $, sub-exponential time\\ \ [Theorem \ref{thm: subset PPSZ kSAT}]} 
\\
\hline
\end{tabular}
\end{center}
\caption{All algorithms are of the form $O^*(c_k)^n$, where $O^*(\cdot)^n$ hides polynomial factors. $R_k$ and $\epsilon_k$ are constants depending on $k$. Cf. Equations (\ref{eq: R_k}) and (\ref{eq: eps_k}) for details.}
\label{table:decision}
\end{table}

In Algorithm \ref{algo: subset ppz ppsz}, we incorporate the subset advice into both PPZ and PPSZ: with inputs $S=\varnothing$ (i.e. $\epsilon = 0$) and $D=1$, Algorithm \ref{algo: subset ppz ppsz} recovers the PPZ algorithm. With inputs $S=\varnothing$ (i.e. $\epsilon = 0$) and $D=o\left( \frac{n}{\log n}\right)$, Algorithm \ref{algo: subset ppz ppsz} recovers the PPSZ algorithm. We use the notation $\phi = \phi |_\sigma$ to denote reducing a formula $\phi$ based on some partial assignment $\sigma$, i.e., given $\phi$ and $\sigma$, remove the clauses satisfied by variables in $\sigma$ as well as literals set to false by $\sigma$. 

\begin{algorithm}[H]
\caption{PPZ \cite{paturi1997satisfiability} and PPSZ \cite{hertli20143} with subset advice}\label{algo: subset ppz ppsz}
\textbf{Input:} (i) $k$-CNF formula $\phi$ where $V$ is the set of variables in $\phi$, (ii) random subset advice $S$ where each variable is included with probability $\epsilon$, (iii) implication parameter $D$, (iv) number of iterations $T$.
\\
\textbf{Initialize:} Let $\sigma$ be the empty assignment on $V$.
\begin{enumerate}
    \item For each assignment $b\in \{0,1\}$ for variable $x_i$ given by $S$, let $\sigma(x_i)=b$ and  $\phi = \phi |_{\sigma}$.
    \item Choose a random permutation $\pi$ of the remaining variables.
    \item For each variable $x_i$ in the order of $\pi$:
    \begin{itemize}
        \item Enumerate over all sets of $D$ clauses: if the value of $x_i$ is being forced to be $b\in\{0,1\}$ by some set of these sets (given previously assigned variables), then set $\sigma(x_i)=b$ and  $\phi = \phi |_{\sigma}$.
        \item Else, set $\sigma(x_i)= 0$ or $1$ uniformly at random and $\phi = \phi |_{\sigma}$. 
    \end{itemize}
    \item If $\sigma$ is a satisfying assignment, return $\sigma$. Otherwise, repeat steps $2\text{-}3$ at most $T$ times. If a satisfying assignment was not found, return "unsatisfiable". 
\end{enumerate}
\end{algorithm}

\subsection{PPZ Algorithm for \ksat with Subset Advice}\label{sec: subset ppz}
First, we briefly introduce some analysis of the PPZ algorithm. Although the original analysis of PPZ \cite{paturi1997satisfiability} applies to general \ksat, it is more involved, and we adopt the simplified arguments summarized in \cite{hansen2019faster}, which assume a unique satisfying assignment. 

Under the uniqueness assumption, each variable $x_i$ has a “critical clause” $C_{x_i}$, where the literal associated with $x_i$ is the only one set to true in the clause. The assignments of variables are either “forced” or “guessed”: if a variable $x_i$ appears in the permutation $\pi$ after all other variables in $C_{x_i}$, that is, if during the execution of the algorithm there exists a unit clause involving $x_i$ or $\bar{x_i}$, then $x_i$ is forced, and its literal is set to true. Since the permutation is random, the probability that $x_i$ is forced is at least $1/k$. If the variable is not forced, then it is guessed uniformly at random. The running time improves when more variables are forced.

Denote by $G(\pi)$ the number of guessed variables given the order of $\pi$. If the formula is satisfiable, we can lower bound the success probability of one iteration of the randomized algorithm
\begin{align*}
    \text{Pr[All guessed variables are correct]}
    &=
    \mathbb{E}_\pi\left[2^{-G(\pi)}\right]
    \geq
    2^{-\mathbb{E}_\pi\left[G(\pi)\right]},
\end{align*}
which follows from Jensen's inequality. 
Note that we can analyze the probability of each variable being guessed individually.
We repeat this process for $T = O^*\left(2^{\mathbb{E}_\pi\left[G(\pi)\right]} \right)$ iterations to succeed with high probability, ensuring a Monte Carlo algorithm guarantee. Alternatively, we can have a Las Vegas algorithm that is always correct, but its running time becomes a random variable with expectation $T$.
Thus, without advice, the running time is shown to be $O^*\left(2^{\left(1-\frac{1}{k}\right)n}\right)$. In the following Theorem, we show that the running time can be improved exponentially with subset advice.

\begin{theorem}[PPZ \cite{paturi1997satisfiability} with subset advice]\label{thm: subset PPZ}
Consider the decision problem of \ksat. Given subset advice $S$ where each variable is included independently with probability $\epsilon$, the running time of the PPZ algorithm $O^*(c_k)^n$ can be improved exponentially, in particular, the base constant becomes $c_k'=2^{\left(1-\frac{1}{k}\frac{1-\epsilon^{k}}{1-\epsilon}\right)(1-\epsilon)}$, comparing to $c_k=2^{\left(1-\frac{1}{k}\right)}$ without advice.
\end{theorem}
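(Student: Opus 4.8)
The plan is to reduce the whole statement to a computation of the expected number of guessed variables $\mathbb{E}[G]$. Indeed, the success-probability analysis recalled above shows that $T = O^*(2^{\mathbb{E}_\pi[G(\pi)]})$ iterations suffice (via $\mathbb{E}_\pi[2^{-G(\pi)}] \geq 2^{-\mathbb{E}_\pi[G(\pi)]}$ from Jensen), so the base of the exponent is exactly $2^{\mathbb{E}[G]/n}$. By linearity of expectation it is enough to compute, for a single variable $x_i$, the probability that it is guessed, taken over the random advice set $S$ and the random permutation $\pi$; dependencies between the critical clauses of different variables are irrelevant since only the expectation of the sum is needed.

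First I would dispose of the revealed variables: every $x_i \in S$ is fixed correctly in Step 1 and is never guessed, an event of probability $\epsilon$. For $x_i \notin S$ (probability $1-\epsilon$) I would invoke the critical-clause argument: $x_i$ is forced whenever all $k-1$ other variables of its critical clause $C_{x_i}$ are resolved before $x_i$ is processed, where a companion variable counts as resolved if it lies in $S$ or if it precedes $x_i$ in $\pi$. I would then condition on the number $j$ of these $k-1$ companions that land in $S$, which is $\mathrm{Binomial}(k-1,\epsilon)$-distributed. Given $j$, only $k-1-j$ companions remain inside the permutation, and by symmetry $x_i$ is forced precisely when it comes last among these $k-j$ unfixed variables, which happens with probability $\tfrac{1}{k-j}$. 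Taking the expectation over $j$ and using the identity $\tfrac{1}{k-j}\binom{k-1}{j} = \tfrac{1}{k}\binom{k}{j}$ collapses the sum via the binomial theorem:
\begin{align*}
\Pr[x_i \text{ forced} \mid x_i \notin S]
&= \sum_{j=0}^{k-1}\binom{k-1}{j}\epsilon^j(1-\epsilon)^{k-1-j}\frac{1}{k-j} \\
&= \frac{1}{k(1-\epsilon)}\sum_{j=0}^{k-1}\binom{k}{j}\epsilon^j(1-\epsilon)^{k-j} = \frac{1-\epsilon^k}{k(1-\epsilon)}.
\end{align*}

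Hence the per-variable guessing probability equals $(1-\epsilon)\bigl(1 - \tfrac{1}{k}\tfrac{1-\epsilon^k}{1-\epsilon}\bigr)$, and summing over all $n$ variables gives $\mathbb{E}[G] = \bigl(1 - \tfrac{1}{k}\tfrac{1-\epsilon^k}{1-\epsilon}\bigr)(1-\epsilon)\,n$, which yields the claimed base $c_k' = 2^{(1 - \frac{1}{k}\frac{1-\epsilon^k}{1-\epsilon})(1-\epsilon)}$. I expect the main obstacle to be bookkeeping rather than any deep idea: I must make the conditioning airtight, checking that the critical-clause event is a genuine lower bound on the true forcing probability (so that $\tfrac{1}{k-j}$ correctly lower-bounds forcing, hence upper-bounds guessing) and that the symmetry argument remains valid after the $S$-variables have been removed from the permutation. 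A secondary technical point is that the running-time bound is stated in terms of $\mathbb{E}_\pi[G]$ for the realized advice $S$; I would note that $G$ concentrates around its mean over the draw of $S$ (by a standard Chernoff/Azuma argument, exactly as in the advice-free PPZ analysis), so that the stated base constant holds with high probability over the advice.
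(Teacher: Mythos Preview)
Your proposal is correct and follows essentially the same approach as the paper: both reduce to bounding $\mathbb{E}[G]$ via the critical-clause argument, condition on the number $j$ of companions landing in $S$, and collapse the resulting binomial sum with the identity $\tfrac{1}{k-j}\binom{k-1}{j}=\tfrac{1}{k}\binom{k}{j}$ to obtain $\tfrac{1}{k}(1-\epsilon^k)$. The only cosmetic difference is that the paper computes the joint probability $\Pr[x_i\notin S\text{ and forced}]$ directly, whereas you compute $\Pr[\text{forced}\mid x_i\notin S]$ and multiply by $1-\epsilon$; your extra remark about concentration over the draw of $S$ is a nice bit of care that the paper leaves implicit.
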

\begin{proof}
We upper bound $\mathbb{E}_\pi\left[G(\pi)\right]$ by analyzing the probability of a variable being forced in PPZ with subset advice:
\begin{align*}
    \prob{\pi}{x_i \notin S \text{ and is forced}} &\ge \prob{\pi}{x_i \notin S,\text{ and all the other variables in } C_{x_i} \text{ either in $S$ or appear in $\pi$ before } x_i }\\
    &\ge\sum_{j=0}^{k-1}\binom{k-1}{j}\epsilon^{j}(1-\epsilon)^{k-j}\frac{1}{k-j}\quad (\text{where }j \text{ variables in this clause appear in }S)\\
    &=\frac{1}{k}\sum_{j=0}^{k-1}\binom{k}{j}\epsilon^{j}(1-\epsilon)^{k-j}\\
    &=\frac{1}{k}\left(1-\epsilon^{k} \right).
\end{align*}

Therefore $\expect{\pi}{G(\pi)}= (1-\epsilon)n - \frac{1}{k}\left(1-\epsilon^{k} \right)n=n(1-\epsilon)\left(1-\frac{1}{k}\frac{1-\epsilon^{k}}{1-\epsilon}\right)$. The base constant of the running time with subset advice is $c_k'=2^{\left(1-\frac{1}{k}\frac{1-\epsilon^{k}}{1-\epsilon}\right)(1-\epsilon)}$. Comparing to $c_k = 2^{1-\frac{1}{k}}$ of the original PPZ algorithm, we have exponential improvement on the running time depending on the advice.  \end{proof}

\subsection{PPSZ Algorithm for \ksat with Subset Advice}\label{sec: subset ppsz}
The PPSZ algorithm improves upon the PPZ algorithm by introducing a preprocessing step called “$D$-bounded resolution” \cite{paturi2005improved}, which was later relaxed to a concept called “$D$-implication” in an adapted version of PPSZ by \cite{hertli20143}, which we adopt here.
The idea is that a variable can be forced even if not all the variables in the critical clause appeared before it (as analyzed in PPZ). In PPSZ, we force a value for a variable by enumerating over all sets of $D$ clauses to check whether the variable takes the same value in all satisfying assignments consistent with these $D$ clauses. If so, the variable is forced to this value, otherwise, we guess randomly. The probability of a variable being forced increases compared to PPZ, which leads to a better running time.
We take $D = D(n)$ to be a slowly growing function of $n$, e.g., $D = o\bigl(\frac{n}{\log n}\bigr)$, so that the enumeration still runs in reasonable time.

First we summarize the analysis of the PPSZ algorithm \cite{paturi2005improved}. 
We choose the random permutation indirectly.
For each variable, we choose a uniformly random $r \in [0,1]$ representing its ``arrival time”, and we determine the order in the permutation by sorting the arrival times. The reason is that the arrival times are completely independent, in contrast to choosing a random permutation directly, where the positions of two variables are not independent.

By summarizing Lemma 6,7,8 \cite{paturi2005improved}, we get
\begin{align}\label{eq: R_k}
    \prob{\pi}{x_i \text{ is forced}}
    &\ge \int_{0}^{1}R_k(r)dr - \Delta_k^{(d)} = R_k- \Delta_k^{(d)},
\end{align}
where $R_k(r)$ is the smallest nonnegative solution $R$ to $\Big(r+(1-r)R\Big)^{k-1} = R$ and $R_k := \int_{0}^{1}R_k(r)dr$. It is shown that $R_k(r)$ is strictly increasing for $r\in \left[0, \frac{k-2}{k-1}\right]$, and $R_k(r)=1$ for $r\in \left[\frac{k-2}{k-1}, 1\right]$. The asymptotic error of convergence satisfies $0\le \Delta_k^{(d)}\le \frac{3}{(d-1)(k-2)+2}$ where $d$ is the minimum hamming distance between satisfying assignments. Assuming uniqueness, $\Delta_k^{(d)}=o(1)$, and goes to $0$ as $D$ goes to infinity.

It is shown that $R_k = \frac{1}{k-1}\sum_{j=1}^{\infty}\frac{1}{j\left(j+\frac{1}{k-1} \right)}$ for $k\ge3$. Similarly to the PPZ algorithm, we make $T=O^*\left(2^{\mathbb{E}_\pi\left[G(\pi)\right]} \right)$ iterations, where here $\expect{\pi}{G(\pi)} = n \left(1-R_k + \Delta_k^{(d)}\right)$. For $k=3$, we can explicitly evaluate $R_3 = 2- 2\ln{2}$, and $T=O^*\left(2^{(2\ln{2}-1+o(1))} \right)^n$. 

In Theorem \ref{thm: subset PPSZ kSAT} we show that we can improve the running time for \ksat with subset advice, then we state the result explicitly for \threesat in Corollary \ref{thm: subset PPSZ 3SAT}, with its proof deferred to Appendix \ref{apx: ommited proofs decision}.

\begin{restatable}[PPSZ with subset advice for \ksat]{theorem}{PPSZkSAT}\label{thm: subset PPSZ kSAT}
Consider the decision problem of \ksat and suppose without advice the PPSZ algorithm runs in time $O^*(c_k)^n$. Given subset advice $S$ with $0<\epsilon<\frac{k-2}{k-1}$, we improve the base constant of the running time to $c_k' = c_k\cdot 2^{-\epsilon_k}$, where $\epsilon_k=\epsilon -  \int_{0}^{\epsilon}{R_k(r)dr}$. In particular, $\epsilon>\epsilon_k>0$. For $\epsilon\ge\frac{k-2}{k-1}$, the running time becomes $O^*(2^{o(n)})$, i.e. sub-exponential.
\end{restatable}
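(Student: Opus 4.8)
The plan is to carry the subset advice through the PPSZ analysis by the same ``boosting'' mechanism used for PPZ in \cref{thm: subset PPZ}, but now at the level of the critical-clause tree and the arrival-time recursion defining $R_k(r)$. First I would recall that in \cref{algo: subset ppz ppsz} the variables in $S$ are assigned their correct values and eliminated in Step~1, before any permutation is drawn. Hence, from the viewpoint of a variable $x_i \notin S$ with arrival time $r$, every other variable $y$ is ``available'' to help force $x_i$ exactly when $y \in S$ or $y$ precedes $x_i$ under $\pi$. Since membership in $S$ is independent across variables and independent of the arrival times, each such $y$ is available with probability
\[
r' := \epsilon + (1-\epsilon)\,r,
\]
rather than $r$ as in the advice-free PPSZ.

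The key step is to show that this substitution $r \mapsto r'$ propagates uniformly through the critical-clause-tree argument of \cite{paturi2005improved} (Lemmas~6--8). Because all arrival-time comparisons in that argument are made against the single threshold $r$ (the arrival time of the root $x_i$), the per-node ``cut'' probability at every level of the tree is boosted from $r$ to the same $r'$, leaving the recursion otherwise intact. Consequently the limiting cut probability is the smallest nonnegative solution of $(r' + (1-r')R)^{k-1} = R$, i.e.\ equals $R_k(r')$, and I obtain
\[
\prob{\pi}{x_i \text{ is forced} \mid x_i \notin S,\ r} \ge R_k\bigl(\epsilon + (1-\epsilon)r\bigr) - \Delta_k^{(d)}.
\]
Integrating over $r \in [0,1]$ and substituting $u = \epsilon + (1-\epsilon)r$ turns the average forcing probability of an active variable into $\tfrac{1}{1-\epsilon}\int_{\epsilon}^{1} R_k(u)\,du - \Delta_k^{(d)}$.

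From here the computation is routine. Only a $(1-\epsilon)$-fraction of variables are active (the rest lie in $S$ and are never guessed), so combining the active fraction with the conditional forcing probability gives
\[
\expect{\pi}{G(\pi)} \le n\Bigl[(1-\epsilon) - \textstyle\int_{\epsilon}^{1} R_k(u)\,du\Bigr] + o(n) = n\bigl(1 - R_k - \epsilon_k\bigr) + o(n),
\]
where I use $\int_{\epsilon}^{1} R_k = R_k - \int_0^\epsilon R_k$ and set $\epsilon_k = \epsilon - \int_0^\epsilon R_k(r)\,dr$. This is exactly the advice-free exponent $1-R_k$ shifted down by $\epsilon_k$, yielding $c_k' = c_k \cdot 2^{-\epsilon_k}$. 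The bounds $0 < \epsilon_k < \epsilon$ then follow from the stated shape of $R_k$: strict positivity of $R_k(u)$ on $(0,\epsilon]$ gives $\int_0^\epsilon R_k > 0$, hence $\epsilon_k < \epsilon$, while $R_k(u) < 1$ for $u < \frac{k-2}{k-1}$ (applicable since $\epsilon < \frac{k-2}{k-1}$) gives $\int_0^\epsilon R_k < \epsilon$, hence $\epsilon_k > 0$.

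Finally, for the sub-exponential regime $\epsilon \ge \frac{k-2}{k-1}$, I would observe that $r' = \epsilon + (1-\epsilon)r \ge \epsilon \ge \frac{k-2}{k-1}$ for every $r \in [0,1]$, so $R_k(r') = 1$ identically; each active variable is forced up to the $o(1)$ slack $\Delta_k^{(d)}$, whence $\expect{\pi}{G(\pi)} = o(n)$ and the running time is $O^*(2^{o(n)})$. The main obstacle I anticipate is the first technical step: verifying rigorously that the $S$-variables act as though they ``arrived at time~$0$'' uniformly at every node of the critical-clause tree, so that the single substitution $r \mapsto r'$ is legitimate and independence is preserved when the tree reuses variables. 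This requires re-inspecting the coupling and pruning inside the PPSZ tree argument, rather than invoking $R_k(r)$ purely as a black box.
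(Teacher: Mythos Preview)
Your proposal is correct and follows essentially the same route as the paper: both arguments observe that a variable outside $S$ sees every other variable ``available'' with probability $r' = \epsilon + (1-\epsilon)r$, invoke the PPSZ critical-clause-tree lemmas to conclude the forcing probability is (at least) $R_k(r')$ up to the $\Delta_k^{(d)}$ slack, and then use the substitution $u = \epsilon + (1-\epsilon)r$ to reduce the integral to $\int_\epsilon^1 R_k(u)\,du = R_k - \int_0^\epsilon R_k$. The only cosmetic difference is that the paper first names the modified recursion $\tilde R_k(r)$ and then identifies it with $R_k(r')$, whereas you jump directly to $R_k(r')$; your treatment of the sub-exponential regime and of the bounds $0<\epsilon_k<\epsilon$ is likewise the same as the paper's, and your flagged concern about independence when the tree reuses variables is exactly the point the paper defers to Lemmas~6--8 of \cite{paturi2005improved}.
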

\begin{proof}
We follow similar arguments as in the proof of \cite{paturi2005improved} for general $k\ge 3$. For each variable in the permutation, we associate it with a uniformly random $r\in[0,1]$ which represents its ``arrival time'' according to $\pi$. Following Lemma 6,7,8 \cite{paturi2005improved},
\begin{align*}
    \prob{\pi}{x_i \notin S\text{ and is forced}}
    &\ge(1-\epsilon)\left(\int_{0}^{1}\tilde{R_k}(r)dr - \Delta_k^{(d)}\right),
\end{align*}
where $\tilde{R}_k(r)$ is the smallest nonnegative solution $\tilde{R}$ to $\left[\epsilon+(1-\epsilon)\Big(r+(1-r)\tilde{R}\Big)\right]^{k-1} = \tilde{R}$. 

Comparing to the equation for $R$, here each branch in the ``critical clause tree'' is more likely to be cut by time $r$ due to the advice: if the variable associated with the branch is in the subset advice, the branch is cut, otherwise the original recursive expression applies. Observe that by a change of variable with $u=g(r)=\epsilon + (1-\epsilon)r $, we can replace $\tilde{R_k}(r)$ with $R_k(u)$, the the smallest nonnegative solution to $R = \left[ u + (1-u)R\right]^{k-1}$, which has the same form as in the original PPSZ. 

We evaluate the probability of a variable being forced and $\expect{\pi}{G(\pi)}$,
\begin{align*}
    \prob{\pi}{x_i\notin S\text{ and is forced}}&\ge (1-\epsilon) \left(\int_{0}^{1}{\tilde{R_k}(r)dr}-\Delta_k^{(d)}\right)\\
    &\ge \int_{0}^{1}{\tilde{R_k}(r)(1-\epsilon)dr}-\Delta_k^{(d)}\\
    &= \int_{g^{-1}(\epsilon)}^{g^{-1}(1)}{R_k(g(r))g'(r)dr}-\Delta_k^{(d)}\\
    &= \int_{\epsilon}^{1}{R_k(u)du}-\Delta_k^{(d)}\\
    &=R_k -\Delta_k^{(d)} - \int_{0}^{\epsilon}{R_k(r)dr}.
\end{align*}
\begin{align}\label{eq: eps_k}
    \expect{\pi}{G(\pi)} &= n\left[(1-\epsilon) - \left(R_k -\Delta_k^{(d)} - \int_{0}^{\epsilon}{R_k(r)dr}\right) \right]\nonumber\\
    &=n\left[1-R_k +\Delta_k^{(d)} -\epsilon_k\right],
\end{align}
where $\epsilon_k=\epsilon -  \int_{0}^{\epsilon}{R_k(r)dr}$. Since $0\le R_k(r)<1$ for $r\in\left[ 0, \frac{k-2}{k-1} \right]$, for $\epsilon<\frac{k-2}{k-1}$ we have $\epsilon>\epsilon_k>0$.
Compare to $c_k =2^{1-R_k+\Delta_k^{(d)}}$ without advice, the improved base constant is $c_k'=c_k\cdot 2^{- \epsilon_k}$. 

Notably, in the original PPSZ, variables arrive ``late'' in the permutation with $r\in [\frac{k-2}{k-1},1 ]$ are forced almost surely. Given subset advice with $\epsilon>\frac{k-2}{k-1}$, if a variable is not included in the advice, the probability that it is forced goes to $1$. Under the uniqueness assumption we have $\Delta_k^{(d)}=o(1)$, so $\expect{\pi}{G(\pi)}$ becomes $o(n)$, i.e., a sub-linear number of variables are guessed, resulting in sub-exponential running time. Refer to \cite{paturi2005improved,hertli20143} for more rigorous analyses on PPSZ as well as discussions in \cite{hansen2019faster}. 
\end{proof}

\begin{restatable}[PPSZ with subset advice for \threesat]{corollary}{PPSZthreeSAT}\label{thm: subset PPSZ 3SAT}
Consider the decision problem of \threesat. Given subset advice $S$ where each variable is included independently with probability $\epsilon$, the running time of the PPSZ algorithm $O^*(c_k)^n$ can be improved exponentially, in particular, the base constant becomes $c_3 = 2^{\left(\frac{\epsilon}{1-\epsilon}+2\ln{(2-2\epsilon)}-1+o(1)\right)}$ for $\epsilon<\frac{1}{2}$ comparing to $c_3= 2^{2\ln{2} - 1 +o(1)}$ without advice, and for $\epsilon\ge\frac{1}{2}$, the running time becomes $O^*(2^{o(n)})$, i.e. sub-exponential.
\end{restatable}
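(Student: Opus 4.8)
The plan is to derive the corollary as the $k=3$ instance of Theorem~\ref{thm: subset PPSZ kSAT}, where the only real work is evaluating $R_3(r)$ and the integral $\int_0^\epsilon R_3(r)\,dr$ in closed form. By the theorem, for $\epsilon < \frac{k-2}{k-1} = \frac{1}{2}$ the improved base constant is $c_3' = c_3 \cdot 2^{-\epsilon_3}$ with $\epsilon_3 = \epsilon - \int_0^\epsilon R_3(r)\,dr$, and without advice $c_3 = 2^{2\ln 2 - 1 + o(1)}$. So it suffices to compute $\int_0^\epsilon R_3(r)\,dr$ explicitly and substitute.

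First I would solve for $R_3(r)$. Setting $k=3$ in the defining relation $(r + (1-r)R)^{k-1} = R$ gives the quadratic $(1-r)^2 R^2 + (2r(1-r) - 1)R + r^2 = 0$. Its discriminant simplifies to $1 - 4r(1-r) = (1-2r)^2$, a perfect square, so for $r \in [0, \frac{1}{2}]$ the smallest nonnegative root is $R_3(r) = \left(\frac{r}{1-r}\right)^2$; for $r \in [\frac{1}{2}, 1]$ one has $R_3(r) = 1$, consistent with the theorem's structural fact that $R_k(r) = 1$ beyond $\frac{k-2}{k-1}$.

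Next I would integrate. Writing $\left(\frac{r}{1-r}\right)^2 = \frac{1}{(1-r)^2} - \frac{2}{1-r} + 1$ and integrating termwise yields $\int_0^\epsilon R_3(r)\,dr = \frac{\epsilon}{1-\epsilon} + 2\ln(1-\epsilon) + \epsilon$ for $\epsilon < \frac{1}{2}$. Substituting this into $\epsilon_3$ and then into $c_3' = 2^{2\ln 2 - 1 - \epsilon_3 + o(1)}$, and using $2\ln(2-2\epsilon) = 2\ln 2 + 2\ln(1-\epsilon)$ to absorb the constant, collapses the exponent to $\frac{\epsilon}{1-\epsilon} + 2\ln(2-2\epsilon) - 1 + o(1)$, which is the claimed constant. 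For $\epsilon \ge \frac{1}{2}$, the sub-exponential bound is immediate from the corresponding case of Theorem~\ref{thm: subset PPSZ kSAT}, since $\frac{k-2}{k-1} = \frac{1}{2}$.

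There is essentially no conceptual obstacle: the statement is a direct specialization. The only place demanding care is recognizing that the discriminant factors as $(1-2r)^2$, so that $\sqrt{1-4r(1-r)} = 1 - 2r$ on $[0,\frac{1}{2}]$ (together with the sign choice that selects the smallest nonnegative root), and then the bookkeeping of the logarithmic constant when rewriting $2\ln 2 + 2\ln(1-\epsilon)$ as $2\ln(2-2\epsilon)$. I would double-check the computation against the known value $R_3 = 2 - 2\ln 2$ by verifying that $\int_0^{1/2} R_3(r)\,dr + \frac{1}{2} = 2 - 2\ln 2$, which serves as a useful consistency check at the boundary $\epsilon = \frac{1}{2}$.
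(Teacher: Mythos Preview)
Your proposal is correct and follows essentially the same approach as the paper: both compute $R_3(r) = \left(\frac{r}{1-r}\right)^2$ on $[0,\tfrac12]$, integrate it over $[0,\epsilon]$, plug into $\epsilon_3 = \epsilon - \int_0^\epsilon R_3(r)\,dr$, and then combine $2\ln 2 + 2\ln(1-\epsilon)$ into $2\ln(2-2\epsilon)$ to obtain the stated exponent. Your version simply spells out the quadratic-formula and partial-fraction details that the paper elides, and your integral $\frac{\epsilon}{1-\epsilon} + 2\ln(1-\epsilon) + \epsilon$ agrees with the paper's $\frac{1}{1-\epsilon} + 2\ln(1-\epsilon) - 1 + \epsilon$.
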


\subsection{Hardness of \threesat With Subset Advice}\label{sec: subset hardness decision}
We state a hardness result for \threesat given subset advice assuming the Exponential Time Hypothesis (ETH) below. Proof is deferred to Appendix \ref{apx: ommited proofs decision}.
\begin{conjecture}[ETH \cite{impagliazzo2001complexity}]\label{ETH} There exists $\delta>0$ such that
no algorithm can solve \threesat in $O(2^{\delta n})$ time where $n$ is the number of variables.
\end{conjecture}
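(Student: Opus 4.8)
The statement labeled \ref{ETH} is the Exponential Time Hypothesis, which is a \emph{conjecture}: it is posited as an assumption, to be invoked as a hypothesis in conditional hardness results (such as the \threesat hardness of Section~\ref{sec: subset hardness decision}), not a claim the paper purports to establish. Accordingly, I do not believe an unconditional proof is available, and I would frame my proposal as an account of why the statement lies far beyond current techniques rather than as a route to a proof.

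First I would pin down what a proof would have to accomplish. The hypothesis asserts a constant $\delta>0$ below which no algorithm decides \threesat in time $O(2^{\delta n})$; in particular it asserts $c_3>1$, i.e.\ the nonexistence of any sub-exponential-time algorithm for \threesat. Since \threesat is NP-complete, ruling out even polynomial-time algorithms is exactly the statement $\mathrm{P}\neq\mathrm{NP}$, so any proof of ETH would in particular resolve the P versus NP question in the negative. In fact ETH is \emph{strictly stronger}: it forbids not merely polynomial but all truly sub-exponential running times, so a proof is at least as hard as separating $\mathrm{P}$ from $\mathrm{NP}$ and then quantitatively sharpening that separation into an explicit exponential lower bound on time.

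The natural line of attack for an unconditional lower bound of this kind would be to establish a superpolynomial (indeed exponential) circuit or algorithmic lower bound for satisfiability---via a time-hierarchy or diagonalization argument, or a reduction from a problem already known to require exponential time. The main obstacle, and I expect it to be insurmountable with present tools, is that the three classical proof paradigms are each provably insufficient: relativizing arguments cannot separate $\mathrm{P}$ from $\mathrm{NP}$ (Baker--Gill--Solovay), the natural-proofs barrier (Razborov--Rudich) rules out a broad class of combinatorial circuit lower bounds under standard cryptographic assumptions, and algebrizing techniques are likewise obstructed (Aaronson--Wigderson). A genuine proof would therefore require a fundamentally non-relativizing, non-naturalizing technique that we currently lack.

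Given this, the honest conclusion is that the statement should be treated as a hypothesis rather than a target. The support for it is empirical and structural, not deductive: decades of algorithmic effort have failed to push $c_3$ to $1$ (the state of the art remains $c_3\approx 1.307$, cf.\ the PPSZ line of work cited above), and ETH has become a robust foundation from which a large body of tight conditional lower bounds is derived. Thus, rather than attempt to prove Conjecture~\ref{ETH}, I would \emph{use} it exactly as the paper does: as the assumption under which the \threesat-with-subset-advice hardness result of Section~\ref{sec: subset hardness decision} is obtained.
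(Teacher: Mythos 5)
You are correct: the statement is the Exponential Time Hypothesis, a conjecture imported from \cite{impagliazzo2001complexity}, and the paper offers no proof of it --- it is used solely as an assumption for the hardness result of Section~\ref{sec: subset hardness decision} (Theorem~\ref{thm: subset decision hardness}). Your treatment --- declining to prove it, noting it strictly strengthens $\mathrm{P}\neq\mathrm{NP}$, and invoking it only as a hypothesis --- matches the paper's use exactly.
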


\begin{restatable}[Hardness of \threesat with subset advice]{theorem}{thmHardnessSat}\label{thm: subset decision hardness} Assuming the ETH, there exists $\epsilon_0>0$ such that for all $\epsilon\le\epsilon_0$, there is no polynomial time algorithm for \threesat given subset advice where each variable is included with probability $\epsilon$.
\end{restatable}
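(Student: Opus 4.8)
The plan is to prove the contrapositive: a polynomial-time algorithm that solves \threesat given $\epsilon$-subset advice, for $\epsilon$ sufficiently small, would yield a sub-exponential algorithm for ordinary \threesat, contradicting the ETH. The only resource the advice provides is the restriction of a satisfying assignment to a random $\epsilon$-fraction of the variables, and the key idea is that this can be \emph{simulated} by brute-force enumeration whose cost is governed by the binary entropy function and tends to $0$ as $\epsilon \to 0$. Thus a small amount of advice cannot push the complexity below the ETH barrier.

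Concretely, suppose $A$ is a polynomial-time algorithm that, given a \threesat instance $\phi$ together with subset advice $(S, x^*|_S)$, correctly decides satisfiability with probability at least $2/3$ over the random draw of $S$ and its internal coins. Given an arbitrary instance $\phi$ on $n$ variables, the reduction enumerates candidate advice: it ranges over all subsets $S' \subseteq \{1,\dots,n\}$ whose size lies in a window $[\epsilon n - w, \epsilon n + w]$ with $w = \Theta(\sqrt{n\log n})$, and over all $2^{|S'|}$ value-assignments $a'$ on $S'$, invoking $A(\phi, (S', a'))$ for each. To secure soundness for free, one passes to the search version of $A$ (by standard self-reduction, or simply by having $A$ return a candidate assignment) and explicitly verifies any returned assignment against $\phi$; the reduction reports ``satisfiable'' iff some enumerated guess produces a verified satisfying assignment.

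Correctness rests on two points. If $\phi$ is unsatisfiable, no guess can yield a satisfying assignment, so verification forces the answer ``unsatisfiable''. If $\phi$ is satisfiable with optimal assignment $x^*$, then the $\epsilon$-biased law of $S$ concentrates on the size window, so the success event of $A$ restricted to the window still has probability bounded away from zero; hence there is some $S_0$ in the window on which $A$ recovers a satisfying assignment with good probability (boosted by amplifying $A$), and the pair $(S_0, x^*|_{S_0})$ is among the enumerated guesses. The number of guesses is $\sum_{s} \binom{n}{s}\, 2^{s} \le 2^{(H(\epsilon)+\epsilon)n + o(n)}$, where $H$ is the binary entropy function, so the total running time is $2^{(H(\epsilon)+\epsilon)n}\cdot \mathrm{poly}(n)$. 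Since $H(\epsilon)+\epsilon \to 0$ as $\epsilon \to 0$, choosing $\epsilon_0$ so that $H(\epsilon_0)+\epsilon_0 < \delta$ (with $\delta$ the ETH constant, and noting $H(\epsilon)+\epsilon$ is increasing near $0$) makes this bound at most $2^{\delta n}$ for all large $n$, contradicting the ETH.

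The main obstacle is soundness combined with the randomized nature of the construction: a bare decision oracle could behave arbitrarily on ``inconsistent'' advice that corresponds to no satisfying assignment, which is precisely why the argument must route everything through verifiable search rather than trusting the oracle's yes/no output. A secondary difficulty is that the random subset $S$ makes the reduction randomized, so strictly speaking it contradicts the randomized form of the ETH (equivalently, one invokes the standard robustness of the ETH under randomization); making the size window, the concentration of $S$, and the amplification of $A$ simultaneously precise enough that the true advice is both enumerated and handled correctly, while keeping all overhead inside the $2^{o(n)}$ slack, is the part that requires the most care.
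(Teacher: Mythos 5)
Your proof is correct and follows the same core idea as the paper's: simulate the advice by brute-force enumeration, whose exponential cost ($2^{(H(\epsilon)+\epsilon)n+o(n)}$ in your version, $2^{\epsilon n}\mathrm{poly}(n)$ in the paper's, which fixes a single subset of size $\epsilon n$) vanishes as $\epsilon\to 0$ and hence drops below the ETH threshold $\delta$. You are in fact more careful than the paper on two points it elides --- soundness against a decision oracle fed out-of-distribution or inconsistent advice (handled by passing to verifiable search) and the fact that the algorithm's guarantee is only over a \emph{random} subset rather than a worst-case one (handled by enumerating a concentration window of subset sizes instead of fixing one subset) --- at the harmless cost of the extra entropy term $H(\epsilon)$ in the exponent.
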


\section{Improving the Approximation Factor for the Optimization Problem}\label{sec: optimization problem}
In this section we study the optimization problem of MAX-SAT with advice in order to improve the approximation factors of polynomial time algorithms. First we show that by incorporating the subset advice in a black-box fashion into any approximation algorithm, we gain an $\Omega(\epsilon)$ improvement over the original approximation factor. Then we focus on \maxtwosat with label advice. Inspired by the work of \cite{ghoshal2024constraint,cohen2024learning}, we extend their work on MAX-$2$-LIN and adapt their techniques to the more general problem of \maxtwosat.

\subsection{MAX-SAT With Subset Advice}\label{sec: subset maxsat}
Given a subset advice $S$, we incorporate it into an approximation algorithm with the following two-step process, and state the performance guarantee in Theorem \ref{thm: subset maxsat}. 
\begin{enumerate}
    \item Set the variables in $S$ based on the advice, remove all satisfied clauses and unsatisfied literals;
    \item Run the approximation algorithm of choice on the reduced instance.
\end{enumerate} 

\begin{theorem}[MAX-SAT with subset advice]\label{thm: subset maxsat}
Consider a MAX-SAT instance and an $\alpha$-approximation algorithm. Suppose we have subset advice $S$ where each variable is included independently with probability $\epsilon$, then the approximation ratio can be improved to $\alpha + (1-\alpha)\epsilon$.\\In particular, the approximation ratio for MAX-SAT is at least $0.796+0.204\epsilon$ based on $\alpha\ge0.796$ achieved by \cite{avidor2005improved}.
\end{theorem}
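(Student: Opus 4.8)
The plan is to analyze the two-step procedure by tracking how the fixed optimal assignment $x^*$ behaves under the reduction. Write $\OPT$ for the number of clauses satisfied by $x^*$. Every clause satisfied by $x^*$ contains at least one literal set to true by $x^*$, so I would partition these $\OPT$ clauses according to where their true literals sit relative to the advice set $S$. Let $A$ be the (random) number of clauses satisfied by $x^*$ that already become satisfied in Step 1, i.e.\ those having at least one $x^*$-true literal on a variable in $S$, and let $B$ be the number of the remaining ones, whose $x^*$-true literals all lie on variables outside $S$. Since each clause satisfied by $x^*$ falls into exactly one of these two groups, we have the deterministic identity $A + B = \OPT$.

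Next I would argue that the reduced instance retains a good optimum. After Step 1 removes the satisfied clauses and deletes the false literals on $S$-variables, the restriction of $x^*$ to the variables outside $S$ is a feasible assignment for the reduced instance, and it satisfies precisely the $B$ surviving clauses (each such clause keeps an $x^*$-true literal, which lives on a non-$S$ variable and is therefore not deleted). Hence the optimum $\OPT'$ of the reduced instance satisfies $\OPT' \ge B$. Running the $\alpha$-approximation on the reduced instance therefore satisfies at least $\alpha\,\OPT' \ge \alpha B$ of the reduced clauses, and since reduced clauses are sub-clauses of original ones, these are genuinely satisfied in the original formula and are disjoint from the $A$ clauses cleared in Step 1. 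Combining, the total number of clauses satisfied is at least
\[
A + \alpha B = \alpha(A+B) + (1-\alpha)A = \alpha\,\OPT + (1-\alpha)A .
\]

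It remains to take expectations over the random advice set $S$ and to lower bound $\E[A]$. For a fixed clause $C$ satisfied by $x^*$, let $V_C$ be the nonempty set of variables carrying an $x^*$-true literal of $C$; the clause contributes to $A$ exactly when $V_C \cap S \neq \varnothing$, which happens with probability $1 - (1-\epsilon)^{|V_C|} \ge \epsilon$ because $|V_C| \ge 1$ and each variable enters $S$ independently with probability $\epsilon$. Summing over the $\OPT$ satisfied clauses gives $\E[A] \ge \epsilon\,\OPT$, so by linearity of expectation the expected number of satisfied clauses is at least $\alpha\,\OPT + (1-\alpha)\epsilon\,\OPT = (\alpha + (1-\alpha)\epsilon)\,\OPT$, which is the claimed ratio.

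The main obstacle is conceptual rather than computational: one must be careful that $A + B = \OPT$ holds deterministically (so that linearity collapses the two terms cleanly) and that the surviving copy of $x^*$ certifies $\OPT' \ge B$ on the reduced instance, including the harmless bookkeeping for empty clauses (clauses whose literals all lay on $S$-variables set false, which are unsatisfiable, contribute $0$ everywhere and can simply be discarded). Everything else reduces to the per-clause union bound $1-(1-\epsilon)^{|V_C|}\ge \epsilon$, which needs only $|V_C|\ge 1$, and this is exactly where the robustness of the black-box reduction to any MAX-SAT variant comes from, since the reduction only shrinks clauses.
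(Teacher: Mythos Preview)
Your proof is correct and follows essentially the same two-step analysis as the paper: split the $\OPT$ clauses into those satisfied in Step~1 versus those remaining, use the restriction of $x^*$ to certify that the reduced instance has optimum at least $B$, and take expectations. The paper's write-up is terser --- it simply assumes pessimistically that each satisfied clause has exactly one $x^*$-true literal, which gives $\E[A]=\epsilon\,\OPT$ directly --- whereas you carry the exact probability $1-(1-\epsilon)^{|V_C|}$ and then lower bound it by $\epsilon$; but the underlying argument is the same.
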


\begin{proof}
Given a MAX-SAT instance with $m$ clauses on $n$ variables, suppose $m^* \le m$ clauses are satisfied in an optimal assignment $x^*$. For any clause that is satisfied in OPT, we assume (pessimistically) that the clause is satisfied by exactly one of its literals in $x^*$. Given subset advice $S$, where each variable is included independently with probability $\epsilon$, each literal's assignment in OPT is revealed with probability $\epsilon$. In particular, the satisfied literal is revealed with probability $\epsilon$, which reduces this clause. In expectation, step 1 reduces $\epsilon m^*$ of the satisfiable clauses. In step 2, $(1-\epsilon)m^*$ of the satisfiable clauses remain, and an $\alpha$-approximation algorithm will satisfy at least $\alpha(1-\epsilon)m^*$ of them.

In total, the number of satisfied clauses is at least $\epsilon m^* + \alpha(1-\epsilon)m^*$, and the approximation ratio is $\alpha' = \epsilon + \alpha(1-\epsilon)  = \alpha + (1-\alpha)\epsilon$.
\end{proof}

Our result for MAX-SAT holds in general for the optimization problem of boolean satisfiability, including \maxtwosat, \maxthreesat, and the performance ratio depends on the state-of-the-art approximation algorithm for the specific problem, as stated by the following corollaries.

\begin{corollary}[\maxtwosat with subset advice]
    Given subset advice $S$ where each variable is included independently with probability $\epsilon$, the approximation ratio for \maxtwosat is at least $0.940+0.06\epsilon$ based on $\alpha\ge0.940$ achieved by \cite{lewin2002improved}.
\end{corollary}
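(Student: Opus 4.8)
The plan is to obtain this statement as an immediate specialization of Theorem \ref{thm: subset maxsat}. That theorem takes any $\alpha$-approximation algorithm for MAX-SAT and, by incorporating the subset advice in the two-step black-box manner, yields approximation ratio $\alpha + (1-\alpha)\epsilon$. Here I would instantiate $\alpha = 0.940$, the current best approximation ratio for \maxtwosat due to Lewin, Livnat, and Zwick \cite{lewin2002improved}, so that $\alpha + (1-\alpha)\epsilon = 0.940 + 0.06\,\epsilon$, which is exactly the claimed bound.

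The only point that genuinely needs to be checked is that the reduction step does not take us outside the class of \maxtwosat instances, since the $0.940$-approximation is specific to \maxtwosat and not to general MAX-SAT. To this end I would observe that fixing the advised variables and then deleting satisfied clauses together with falsified literals can only remove clauses or shrink them: a clause of width at most $2$ either becomes satisfied (and is deleted), loses a literal and becomes a unit clause, or is left unchanged. In every case the reduced formula is again a \maxtwosat instance, so the Lewin--Livnat--Zwick algorithm can be run on it verbatim, and the black-box guarantee of Theorem \ref{thm: subset maxsat} applies.

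With this structural observation in place, the expectation bookkeeping from the proof of Theorem \ref{thm: subset maxsat} carries over with no change: pessimistically charging each optimally satisfied clause to a single satisfying literal, that literal's value is revealed with probability $\epsilon$, so in expectation an $\epsilon$ fraction of the $m^*$ optimally satisfiable clauses are already satisfied by the reduction, and the $0.940$-approximation recovers at least a $0.940$ fraction of the remaining $(1-\epsilon)m^*$. Summing gives the ratio $0.940 + 0.06\,\epsilon$. I expect no real obstacle here; the entire content is the structural closure under reduction noted above, which is routine.
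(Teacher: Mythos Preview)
Your proposal is correct and matches the paper's approach exactly: the paper simply states this corollary as an immediate specialization of Theorem~\ref{thm: subset maxsat} with $\alpha = 0.940$, without a separate proof. Your additional observation that the reduction step preserves the \maxtwosat structure (clauses can only shrink or be removed) is a valid and useful point that the paper leaves implicit.
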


\begin{corollary}[\maxthreesat with subset advice]
    Given subset advice $S$ where each variable is included independently with probability $\epsilon$, the approximation ratio for \maxthreesat is at least $\frac{7}{8}+\frac{1}{8}\epsilon$ where $\alpha \ge \frac{7}{8}$ for fully satisfiable instances according to \cite{karloff19977}\footnote{It is conjectured that $\alpha\ge\frac{7}{8}$ holds for arbitrary \maxthreesat instances as well.}.
\end{corollary}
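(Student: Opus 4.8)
The plan is to obtain this corollary as an immediate instantiation of Theorem \ref{thm: subset maxsat}, using the Karloff--Zwick algorithm as the base approximation. Theorem \ref{thm: subset maxsat} establishes that the two-step reduction---fix the variables revealed by $S$ according to the advice, then run any $\alpha$-approximation algorithm on the residual instance---turns an $\alpha$-approximation into an $\alpha + (1-\alpha)\epsilon$ approximation. For \maxthreesat the relevant base guarantee is $\alpha = \tfrac{7}{8}$ from \cite{karloff19977}, so substituting into $\alpha + (1-\alpha)\epsilon$ gives $\tfrac{7}{8} + \bigl(1 - \tfrac{7}{8}\bigr)\epsilon = \tfrac{7}{8} + \tfrac{1}{8}\epsilon$, which is exactly the claimed ratio.

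The one point that needs a brief verification is that the residual instance produced in step~1 of the reduction is still a legitimate input for the Karloff--Zwick algorithm. Fixing the values of the variables in $S$ only deletes clauses (those already satisfied) or shortens them (by dropping falsified literals), so every surviving clause still contains at most three literals and the residual formula is again a \maxthreesat instance. Moreover, if the original formula is satisfiable, then the restriction of the optimal assignment $x^*$ to the unfixed variables satisfies the residual formula, so the residual instance remains fully satisfiable; hence the $\tfrac{7}{8}$ guarantee of \cite{karloff19977}, which is stated for satisfiable instances, applies verbatim on the residual instance. The ``fully satisfiable'' caveat therefore propagates cleanly into the statement, matching the footnote (with the extension to arbitrary instances being conjectural, exactly as in the base result).

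I expect no genuine obstacle here, since the corollary is a direct specialization of Theorem \ref{thm: subset maxsat}: the arithmetic $\tfrac{7}{8} + (1-\tfrac{7}{8})\epsilon = \tfrac{7}{8} + \tfrac{1}{8}\epsilon$ is trivial, and the only substantive step is the routine structural check above confirming that the black-box reduction preserves both the clause-length bound and satisfiability, so that the Karloff--Zwick guarantee is still available on the reduced instance.
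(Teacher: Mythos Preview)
Your proposal is correct and matches the paper's approach: the corollary is stated without its own proof, as an immediate instantiation of Theorem~\ref{thm: subset maxsat} with $\alpha = \tfrac{7}{8}$. Your additional verification that the residual instance remains a \maxthreesat instance and stays satisfiable is a useful detail that the paper leaves implicit.
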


Without advice, the corresponding inapproximability bound assuming $P\ne NP$ is $0.9545$ for \maxtwosat and $\frac{7}{8}$ for \maxthreesat (even on fully satisfiable instances) \cite{haastad2001some}. Assuming the Unique Games Conjecture, the $0.940$ approximation for \maxtwosat is optimal \cite{brakensiek2024tight}.

\subsection{Hardness of \maxthreesat With Subset Advice}\label{sec: subset hardness optimization}
We state a hardness result for \maxthreesat with subset advice below, assuming the Gap Exponential Time Hypothesis (Gap-ETH). Alternatively, instead of assuming Gap-ETH, we could assume the Exponential Time Hypothesis (ETH) together with the Linear-Size PCP Conjecture (cf. \cite{dinur2016mildly,manurangsi2016birthday}).
Proof is deferred to Appendix \ref{apx: ommited proofs optimization}.
\begin{conjecture}[Gap-ETH \cite{dinur2016mildly,manurangsi2016birthday}]\label{GapETH} There exist constants $\delta,\gamma$ such that given \maxthreesat instance $\phi$, no $ O(2^{\delta n})$-time algorithm can distinguish between the case that $\mathrm{sat}(\phi)=1$ and the case that $\mathrm{sat}(\phi)\le 1-\gamma$, where $\mathrm{sat}(\phi)$ denotes the maximum fraction of satisfiable clauses. 
\end{conjecture}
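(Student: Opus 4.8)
The plan is to prove the claimed inapproximability of \maxthreesat with subset advice by a gap-preserving reduction \emph{from} the no-advice gap problem that is hard under Gap-ETH (\cref{GapETH}), in direct analogy with the advice-enumeration idea behind the decision-problem result \cref{thm: subset decision hardness}. Fix $\epsilon_0$ to be any constant smaller than the Gap-ETH time-exponent $\delta$, fix some $\epsilon\le\epsilon_0$, and suppose for contradiction that there is a polynomial-time $(1-\gamma')$-approximation algorithm $A$ for \maxthreesat in the subset-advice model at rate $\epsilon$, for some fixed $\gamma'<\gamma$ (equivalently, an advice algorithm that certifies an assignment of value $>1-\gamma$ whenever $\mathrm{sat}(\phi)=1$). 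I would use $A$ to build a no-advice algorithm $B$ solving the $\gamma$-gap problem in time $2^{(\epsilon+o(1))n}$, which for $\epsilon\le\epsilon_0<\delta$ contradicts \cref{GapETH}.

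The reduction is as follows. On input $\phi$ over $n$ variables, $B$ draws a single random subset $S\subseteq\{1,\dots,n\}$ by including each variable independently with probability $\epsilon$ (exactly matching the advice distribution), enumerates all $2^{|S|}$ partial assignments $\tau\in\{0,1\}^{S}$, runs $A(\phi,S,\tau)$ to obtain an output assignment $y_\tau$, computes the true fraction $v_\tau$ of clauses that $y_\tau$ satisfies, and accepts iff $\max_\tau v_\tau>1-\gamma$. The correctness rests on two observations. In the YES case $\mathrm{sat}(\phi)=1$ there is a satisfying assignment $x^*$, and the enumeration necessarily contains the genuine advice $\tau=x^*|_S$; for that $\tau$ the algorithm $A$ receives advice drawn from an optimal assignment, so for a typical $S$ (which a sampled $S$ is with constant probability) it outputs $y_\tau$ with $v_\tau\ge(1-\gamma')\cdot 1=1-\gamma'>1-\gamma$, and $B$ accepts. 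The conceptually central observation is that in the NO case $\mathrm{sat}(\phi)\le 1-\gamma$ I need no guarantee on $A$ whatsoever: \emph{every} assignment satisfies at most $\mathrm{sat}(\phi)\le 1-\gamma$ of the clauses, so $v_\tau\le 1-\gamma$ for all $\tau$ no matter how $A$ behaves on incorrect advice, and $B$ rejects. Evaluating the objective value $v_\tau$ directly—which is polynomial time—is precisely what lets me avoid relying on $A$ under advice that does not come from an optimal assignment, and it is what kills any false positives from the $2^{|S|}-1$ ``wrong'' guesses.

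For the accounting, $|S|$ is a $\mathrm{Binomial}(n,\epsilon)$ random variable concentrated around $\epsilon n$, so the enumeration costs $2^{(\epsilon+o(1))n}$ and each inner iteration (one call to $A$ plus one clause count) is polynomial, giving total running time $2^{(\epsilon+o(1))n}\le 2^{(\epsilon_0+o(1))n}$; since $\epsilon_0<\delta$, this beats $2^{\delta n}$ and contradicts \cref{GapETH}. The standard amplification points—that $A$ succeeds only for a constant fraction of subsets $S$ and only with constant probability over its internal coins—are absorbed by repeating the whole procedure a constant number of times with fresh $S$ and boosting $A$'s success probability, affecting only the polynomial factors. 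I expect the only genuine care to be the simultaneous setting of parameters: I must take $\gamma'$ strictly below the Gap-ETH gap $\gamma$ so that the YES-case floor $1-\gamma'$ provably exceeds the unconditional NO-case ceiling $1-\gamma$, while keeping $\epsilon_0<\delta$ so the enumeration stays sub-$2^{\delta n}$; once both constraints are pinned down, the remaining content is just the concentration and amplification estimates sketched above.
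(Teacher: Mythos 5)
Your proposal is correct and follows essentially the same route as the paper's proof of \cref{thm: subset optimization hardness}: simulate the subset advice by enumerating all roughly $2^{\epsilon n}$ assignments to the advised variables, evaluate each returned assignment's objective directly (so no guarantee is needed from the algorithm on wrong guesses, which is exactly what kills false positives in the NO case), and conclude a $2^{\epsilon n}\,\mathrm{poly}(n) \le 2^{\delta n}$ distinguisher for the gap problem, contradicting Gap-ETH. Your write-up is in fact slightly more careful than the paper's — you sample $S$ from the true advice distribution with constant-round amplification rather than fixing a size-$\epsilon n$ subset, and you introduce $\gamma' < \gamma$ to avoid the boundary case where the YES-case value equals the NO-case ceiling $1-\gamma$ — but these are refinements of the same argument, not a different approach.
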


\begin{restatable}[Hardness of \maxthreesat with subset advice]{theorem}{thmHardnessMaxSat}\label{thm: subset optimization hardness} Assuming the Gap-ETH, there exists $\epsilon_0=\epsilon_0(\delta,\gamma)>0$ such that for all $\epsilon\le\epsilon_0$, there is no polynomial time algorithm for \maxthreesat given subset advice with parameter $\epsilon$, such that given a satisfiable instance returns a solution that satisfies at least a $(1-\gamma)$-fraction of the clauses.
\end{restatable}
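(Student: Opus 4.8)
The plan is to prove the statement by contraposition via a self-reduction: I will show that a polynomial-time algorithm $A$ solving \maxthreesat with subset advice of parameter $\epsilon\le\epsilon_0$ to within a $(1-\gamma)$-fraction on satisfiable instances would yield a sub-exponential distinguisher for the Gap-ETH promise problem, contradicting the conjecture. The key observation is that $A$ only requires the correct optimal values on a random $\epsilon$-fraction of the variables; when $\epsilon$ is small, this advice can be supplied by brute-force enumeration at a cost that stays strictly below the $2^{\delta n}$ budget that Gap-ETH forbids beating.

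Concretely, given a \maxthreesat instance $\phi$ drawn from the Gap-ETH promise (so either $\mathrm{sat}(\phi)=1$ or $\mathrm{sat}(\phi)\le 1-\gamma$), I would first sample a set $S$ by including each variable independently with probability $\epsilon$, exactly as the subset advice model prescribes. Since the true optimal values on $S$ are unknown, I would then enumerate over all $2^{|S|}$ possible assignments $\tau$ to the variables of $S$; for each candidate $\tau$ I run $A$ on $\phi$ with advice $(S,\tau)$ (repeating polynomially many times to amplify the success probability if $A$ is randomized), record the full assignment it returns, and finally output the best assignment found over all candidates together with the fraction of clauses it satisfies.

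For correctness, fix an optimal assignment $x^*$. If $\mathrm{sat}(\phi)=1$ then $\phi$ is satisfiable, and among the $2^{|S|}$ enumerated candidates we in particular try $\tau=x^*|_S$; for that candidate the pair $(S,x^*|_S)$ is distributed precisely as genuine subset advice for $x^*$, so the guarantee of $A$ applies and, with high probability over the draw of $S$ and its internal coins, it returns an assignment satisfying at least a $(1-\gamma)$-fraction, whence so does the best assignment we keep. If instead $\mathrm{sat}(\phi)\le 1-\gamma$, then no assignment at all satisfies more than a $(1-\gamma)$-fraction, so whatever the procedure outputs cannot exceed this threshold; hence the satisfied fraction separates the two cases. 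The only delicate point is the coincident boundary value $1-\gamma$, where the $\ge$-guarantee of $A$ and the $\le$-ceiling of the NO case meet; since all these quantities are constants independent of the instance, this is absorbed into the choice of gap and threshold (equivalently, the forbidden guarantee may be taken strict) without affecting the reduction.

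For the running time, the dominant cost is the enumeration, so the distinguisher runs in time $\mathrm{poly}(n)\cdot 2^{|S|}$. Because $|S|\sim\mathrm{Bin}(n,\epsilon)$ concentrates around $\epsilon n$, a Chernoff bound gives $|S|\le 2\epsilon n$ except with probability $e^{-\Omega(\epsilon n)}$; if $|S|$ ever exceeds this cap I simply abort, which contributes only an exponentially small error to a randomized refutation of Gap-ETH. Taking $\epsilon_0=\epsilon_0(\delta,\gamma)$ small enough that $2\epsilon_0<\delta$ then forces the total running time below $2^{\delta n}$ for every $\epsilon\le\epsilon_0$, completing the contradiction. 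The main obstacle, and the precise reason $\epsilon_0$ must be taken small, is exactly this tension: we must guess enough of the optimal assignment to trigger $A$'s approximation guarantee, yet keep the guessing cost $2^{|S|}$ strictly sub-$2^{\delta n}$, since it is the enumeration step that converts an advice-aided approximation into an advice-free sub-exponential algorithm for the gap problem.
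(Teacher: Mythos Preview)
Your proposal is correct and follows essentially the same approach as the paper: simulate subset advice by enumerating all $2^{|S|}$ assignments on the advice set, run the assumed approximation algorithm for each candidate, and observe that the resulting sub-exponential distinguisher violates Gap-ETH when $\epsilon_0<\delta$. You are in fact slightly more careful than the paper's proof, which just ``fixes a subset of size $\epsilon n$'' rather than sampling $S$ from the correct product distribution and controlling $|S|$ via Chernoff as you do; your extra care with the random choice of $S$ and with the $1-\gamma$ boundary does not change the argument but makes it more faithful to the advice model.
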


\subsection{\maxtwosat With Label Advice}\label{sec: label max2sat}
Given a label advice to an optimization problem, we first evaluate the performance of directly adopting the advice as a solution. Suppose we are given a label advice $\tilde{x}$ to a \maxeksat problem based on a ground-truth optimal assignment $x^*$. Consider a clause $C$ satisfied by $x^*$. We assume the worst-case where exactly one literal in $C$ is set to true by $x^*$ to obtain a lower-bound on the probability of $C$ being satisfied by $\tilde{x}$, which gives us the approximation factor,
\begin{align*}
    \alpha_k\ge \prob{\tilde{x}}{C \text{ is satisfied by }\tilde{x} \mid C\text{ is satisfied by }x^*} = 1-\left(\frac{1-\epsilon}{2}\right)\left(\frac{1+\epsilon}{2}\right)^{k-1}.
\end{align*}

For comparison, let $\beta_k$ denote the approximation factor of a random assignment, where $\beta_k = 1 - \frac{1}{2^k}$.
Unlike subset advice, the direct application of label advice does not immediately improve approximation performance. On the one hand, for $k = 2$, following the advice improves upon random assignment ($\alpha_2 > \beta_2$), but does not surpass the current best approximation ratio of 0.94 unless $\epsilon \ge 0.872$.
Moreover, for $k \ge 3$, unless $\epsilon$ is sufficiently large, following the label advice does not even outperform random assignment. For example, $\alpha_3 < \beta_3$ unless $\epsilon \ge 0.618$.

This motivates more refined methods for incorporating label advice, and in this section we focus exclusively on the \maxtwosat problem. We take inspiration from the prior work of \cite{ghoshal2024constraint} on the closely related MAX-CUT and MAX-$2$-LIN problems, which can be viewed as special cases of \maxtwosat via reduction. We modify the algorithm of \cite{ghoshal2024constraint} and adapt their analysis to achieve similar results for \maxtwosat.

Given a \maxtwosat formula $\phi$ with $m$ clauses and $n$ variables, we consider the $2n$ literals corresponding to the variables, i.e., pad the variables with $x_{n+1},\ldots,x_{2n}$, and replace $\bar{x_i}$ with $x_{n+i}$ in $\phi$. Following the convention of \cite{zwick2000analyzing} and \cite{lewin2002improved}, we define a vector $y\in \{-1,1\}^{2n+1}$ with respect to an assignment on literals $x \in \{0,1\}^{2n}$ in the following way: fix $y_0 = 1$ representing ``false'', and for $i=1,\ldots, 2n$: $y_i=1$ if $x_i=0$, $y_i=-1$ if $x_i =1 $.

Define the adjacency matrix on the $2n$ literals with an additional row and column of $0$ at index $0$ to match the dimension of $y$, i.e., $A \in \R^{(2n+1)\times (2n+1)}$ where 
\[ A_{ij} = 
\begin{cases}
   1 & \text{if } (x_i \lor x_j)\in \phi,\\
   0 & \text{otherwise}.
\end{cases}\]

Given $y$ and $A$ as defined above, the number of satisfied clauses equals to the integer quadratic form formulated by \cite{goemans1994879}:
\begin{align*}
    \numSAT(y) &= \sum_{(i,j)\in \phi}\frac{3-y_0y_i - y_0y_j - y_iy_j}{4} \\
    &=\frac{3}{4}m - \frac{1}{4}\sum_{(i,j)\in \phi}{(y_0y_i + y_0y_j)} - \frac{1}{4}\sum_{(i,j)\in \phi}{y_iy_j}\\
    &=\frac{3}{4}m - \frac{1}{4}\sum_{i\in[2n]}{y_0y_id_i}  - \frac{1}{8}\langle{y,Ay}\rangle \quad (\text{where }d_i \text{ is the degree of literal }i)\\
    &=\frac{3}{4}m -\frac{1}{8}f(y),
\end{align*}
where $f(y):= 2\underset{i\in[2n]}{\sum}y_0y_id_i +\langle{y,Ay}\rangle$. Note that $d_i$'s are constants given $\phi$.

In Algorithm \ref{algo: label max2sat}, we modify the objective of the quadratic program from \cite{ghoshal2024constraint} with the quadratic form for \maxtwosat and the result is stated in Theorem \ref{thm: label max2sat}. The proof follows the analysis of \cite{ghoshal2024constraint}, and we include our modified proof in Appendix \ref{apx: ommited proofs optimization} for completeness.

\begin{algorithm}[H]
\caption{\maxtwosat with label advice \cite{ghoshal2024constraint}}\label{algo: label max2sat}
    \begin{algorithmic}[1]
    \Require{
    (i) Adjacency matrix $A \in \R^{(2n+1) \times (2n+1)}$, (ii) advice vector $\tilde{y} \in \{-1, 1\}^{2n+1}$ based on the advice $\tilde{x}\in \{0,1\}^{n}$.
    }
    \Ensure{Solution $\hat{x}\in \{0,1\}^{n}$.}
    \State Solve the quadratic program:
    \begin{align*}
        \text{min } & F(y,\tilde{y}) =2\underset{i\in[2n]}{\sum}y_0y_id_i+ \langle y, A \tilde{y}/\epsilon \rangle + \onenorm{A(y - \tilde{y}/\epsilon)} \\
        \text{ subject to: }& y_0 = 1, \ y_i \in [-1,1], \ y_i = - y_{i+n}.
    \end{align*}
    \State Round the real-valued solution $y$ coordinate-by-coordinate to integer-valued solution $\hat{y}\in \{-1, 1\}^{2n+1}$ such that $f(\hat{y}) \le f(y)$.
    \State \Return $\hat{x}$ where $\hat{x_i} = -\frac{\hat{y_i}-1}{2}$ for $i=1,\ldots, n$.
    \end{algorithmic}
\end{algorithm}

\begin{restatable}[\maxtwosat with label advice]{theorem}{thmMaxTwoSatLabel}\label{thm: label max2sat}
    For an unweighted \maxtwosat instance, suppose we are given label advice $\tilde{x}$ with correct probability $\frac{1+\epsilon}{2}$ and the instance has average degree $\Delta\ge\Omega\lr{\frac{1}{\epsilon^2}}$, then Algorithm \ref{algo: label max2sat} finds solution $\hat{x}$ in polynomial time such that at least $\OPT\cdot\left(1-O\lr{\frac{1}{\epsilon \sqrt{\Delta}}}\right)$ clauses are satisfied in expectation over the randomness of the advice.
\end{restatable}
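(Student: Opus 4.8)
The plan is to treat the rescaled advice $\tilde y/\epsilon$ as an \emph{unbiased estimator} of the optimal vector $y^\ast$: indeed $\E[\tilde y_i]=\epsilon\,y_i^\ast$, so $\E[\tilde y/\epsilon]=y^\ast$. The quadratic form $f$ we must minimize is non-convex ($A$ is symmetric with zero diagonal, hence indefinite), so the objective $F(y,\tilde y)$ of Algorithm~\ref{algo: label max2sat} replaces one copy of $y$ in $\ip{y,Ay}$ by this estimator to obtain the linear term $\ip{y,A\tilde y/\epsilon}$, and adds the correction $\onenorm{A(y-\tilde y/\epsilon)}$. The goal is to show that the minimizer $y$ of $F$, after the rounding in Step~2, satisfies $\E[f(\hat y)]\le f(y^\ast)+(\text{small additive error})$, and then to convert this into the claimed multiplicative guarantee. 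This follows the strategy of \cite{ghoshal2024constraint}, specialized to the quadratic form of \maxtwosat.

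First I would record a \emph{deterministic} chain of inequalities valid for every realization of the advice. On the feasible set $\norm{y}_\infty\le 1$, so H\"older's inequality gives $\lrabs{\ip{y,A(y-\tilde y/\epsilon)}}\le\onenorm{A(y-\tilde y/\epsilon)}$, whence
\begin{align*}
f(y)=2\!\!\sum_{i\in[2n]}\!\! y_0 y_i d_i+\ip{y,A\tilde y/\epsilon}+\ip{y,A(y-\tilde y/\epsilon)}\le F(y,\tilde y).
\end{align*}
Chaining the rounding guarantee $f(\hat y)\le f(y)$ with the optimality of $y$ for $F$ (so $F(y,\tilde y)\le F(y^\ast,\tilde y)$, using that $y^\ast$ is feasible) gives $f(\hat y)\le F(y^\ast,\tilde y)$ pointwise. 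Taking expectations and using unbiasedness, $\E[\ip{y^\ast,A\tilde y/\epsilon}]=\ip{y^\ast,Ay^\ast}$, so the linear term reconstructs $f(y^\ast)$ and
\begin{align*}
\E[f(\hat y)]\le f(y^\ast)+\E\lrbra{\onenorm{A(y^\ast-\tilde y/\epsilon)}}.
\end{align*}

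The crux is bounding this error term. Put $w:=y^\ast-\tilde y/\epsilon$; then $\E[w_i]=0$ and $\mathrm{Var}(w_i)=(1-\epsilon^2)/\epsilon^2\le 1/\epsilon^2$. The literal padding forces $w_{i+n}=-w_i$, so each coordinate $(Aw)_i=\sum_{j=1}^n(A_{ij}-A_{i,n+j})w_j$ is a sum of \emph{independent} mean-zero terms, and since $(A_{ij}-A_{i,n+j})^2\le A_{ij}+A_{i,n+j}$ its variance is at most $d_i/\epsilon^2$. By Jensen, $\E[\lrabs{(Aw)_i}]\le\sqrt{\mathrm{Var}((Aw)_i)}\le\sqrt{d_i}/\epsilon$; Cauchy--Schwarz together with $\sum_i d_i=2m=\Theta(n\Delta)$ then yields
\begin{align*}
\E\lrbra{\onenorm{Aw}}=\sum_{i=1}^{2n}\E\lrbra{\lrabs{(Aw)_i}}\le\frac1\epsilon\sqrt{2n\sum_i d_i}=O\!\lr{\frac{n\sqrt\Delta}{\epsilon}}.
\end{align*}

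Finally I would translate back to satisfied clauses. Since $\numSAT(y)=\tfrac34 m-\tfrac18 f(y)$, the additive bound becomes $\E[\numSAT(\hat y)]\ge\OPT-O(n\sqrt\Delta/\epsilon)$. The easy lower bound $\OPT\ge\tfrac34 m=\Omega(n\Delta)$ (a uniform random assignment satisfies each $2$-clause with probability $3/4$) turns the additive error into a multiplicative $O\lr{1/(\epsilon\sqrt\Delta)}$ factor, giving $\E[\numSAT(\hat y)]\ge\OPT\lr{1-O\lr{1/(\epsilon\sqrt\Delta)}}$; the hypothesis $\Delta\ge\Omega(1/\epsilon^2)$ is exactly what keeps this factor bounded away from $0$. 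I expect the error-term estimate to be the main obstacle --- getting the variance right, handling the antipodal dependence $w_{i+n}=-w_i$ so the summands stay independent, and keeping Cauchy--Schwarz tight enough to produce $\sqrt\Delta$ rather than $\Delta$. A secondary point to verify is that Step~2 can round coordinate-by-coordinate without increasing $f$ while respecting $y_i=-y_{i+n}$; this holds because $A_{ii}=0$ makes $f$ multilinear, hence affine in each antipodal pair, so the method of conditional expectations applies.
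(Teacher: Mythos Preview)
Your proposal is correct and follows exactly the paper's four-step chain $f(\hat y)\le f(y)\le F(y,\tilde y)\le F(y^\ast,\tilde y)$, then taking expectations and bounding $\E\lrbra{\onenorm{Aw}}$, with the final conversion via $\OPT\ge\tfrac34 m$. The only cosmetic difference is in the error-term bookkeeping: the paper bounds $\E\lrbra{\onenorm{Aw}}\le\sqrt{2n}\,\E\lrbra{\twonorm{Aw}}$ and then uses $\E\lrbra{\twonorm{Aw}^2}\le\frac{1-\epsilon^2}{\epsilon^2}\fnorm{A}^2$, whereas you compute the per-coordinate variance directly (and, in fact, handle the antipodal dependence $w_{i+n}=-w_i$ more explicitly than the paper does); both routes give the same $\frac{2}{\epsilon}\sqrt{mn}=O\!\lr{n\sqrt{\Delta}/\epsilon}$ bound.
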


We formally define the MAX-$2$-LIN problem and show a folklore reduction to \maxtwosat.

\begin{definition}[MAX-$2$-LIN]
    In the (unweighted) MAX-$2$-LIN problem, we are given a set of Boolean variables $\{x_i\}_{i=1}^{n}$ and $m$ constraints of the form $x_i\cdot x_j = c_{ij}$ where $c_{ij} \in \{\pm1\}$. The goal is to find an assignment $\hat{x}\in \{\pm1\}^n$ that maximize the total number of satisfied constraints.
\end{definition}

Given a MAX-$2$-LIN instance on $n$ variables and $m$ constraints, we can reduce it to \maxtwosat on $n$ variables and $M=2m$ clauses in the following way:
\begin{itemize}
    \item For MAX-$2$-LIN constraint $x_i\cdot x_j = +1$, add $2$ clauses in \maxtwosat: $(x_i \lor \bar{x_j}) \land (\bar{x_i} \lor x_j)$;
    \item For MAX-$2$-LIN constraint $x_i\cdot x_j = -1$, add $2$ clauses in \maxtwosat: $(x_i \lor x_j) \land (\bar{x_i} \lor \bar{x_j})$.
\end{itemize}
This reduction preserves approximation as stated in Proposition \ref{prop: max2lin to max2sat} below, and given label advice, our Theorem \ref{thm: label max2sat} is a generalization of Theorem 1.4 from \cite{ghoshal2024constraint}. Proof is deferred to Appendix \ref{apx: ommited proofs optimization}. 

\begin{restatable}[]{proposition}{propLinToSat}\label{prop: max2lin to max2sat}
Under the reduction above, a $\left(1-O(\delta)\right)$-approximation to \maxtwosat corresponds to a $\left(1-O(\delta)\right)$-approximation to MAX-$2$-LIN.
\end{restatable}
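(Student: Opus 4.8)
The plan is to show the reduction is approximation-preserving by first establishing an \emph{exact} affine identity between the number of satisfied clauses in the \maxtwosat instance and the number of satisfied constraints in the MAX-$2$-LIN instance, and then using the fact that $\OPT_{\mathrm{LIN}} = \Omega(m)$ to convert the additive slack coming from an approximate SAT solution into a multiplicative slack for LIN.

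First I would fix an arbitrary assignment $x$ (the two instances share the same variable set, so no translation is needed) and analyze each MAX-$2$-LIN constraint separately. For a constraint $x_i \cdot x_j = +1$, the two clauses $(x_i \lor \bar{x_j})$ and $(\bar{x_i} \lor x_j)$ are \emph{both} satisfied exactly when $x_i = x_j$ as Boolean values, and \emph{exactly one} is satisfied when $x_i \ne x_j$; the case $x_i \cdot x_j = -1$ is symmetric with clauses $(x_i \lor x_j)$ and $(\bar{x_i} \lor \bar{x_j})$. Thus each satisfied LIN constraint contributes two satisfied SAT clauses and each violated one contributes exactly one. Summing over all $m$ constraints, if $x$ satisfies $L$ of the LIN constraints it satisfies exactly $m + L$ of the $M = 2m$ clauses. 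Since this is a strictly increasing affine function of $L$, the optimizers of the two problems coincide and $\OPT_{\mathrm{SAT}} = m + \OPT_{\mathrm{LIN}}$.

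The second and crucial step uses the standard fact that a uniformly random $\pm 1$ assignment satisfies each MAX-$2$-LIN constraint with probability $1/2$, so $\OPT_{\mathrm{LIN}} \ge m/2$, i.e.\ $m \le 2\,\OPT_{\mathrm{LIN}}$. Given a $(1 - O(\delta))$-approximate \maxtwosat solution $\hat{x}$ satisfying at least $(1 - O(\delta))\,\OPT_{\mathrm{SAT}}$ clauses, and writing $L$ for the LIN constraints it satisfies, the affine identity gives $m + L \ge (1 - O(\delta))(m + \OPT_{\mathrm{LIN}})$. Rearranging yields $L \ge \OPT_{\mathrm{LIN}} - O(\delta)\,(m + \OPT_{\mathrm{LIN}})$, and substituting $m + \OPT_{\mathrm{LIN}} \le 3\,\OPT_{\mathrm{LIN}}$ shows $L \ge (1 - O(\delta))\,\OPT_{\mathrm{LIN}}$, which is the desired MAX-$2$-LIN guarantee.

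The only real subtlety, and the step I expect to need the most care, is precisely this additive-to-multiplicative conversion: a naive bound leaves an error term of order $\delta m$, which is \emph{not} automatically $O(\delta)\,\OPT_{\mathrm{LIN}}$, because the ``$+m$'' offset in $\OPT_{\mathrm{SAT}} = m + \OPT_{\mathrm{LIN}}$ is comparable in size to $\OPT_{\mathrm{SAT}}$ itself. The entire proposition is saved only by the fact that $\OPT_{\mathrm{LIN}}$ is a constant fraction of $m$, so I would make sure to state the bound $\OPT_{\mathrm{LIN}} \ge m/2$ explicitly and invoke it at exactly this point.
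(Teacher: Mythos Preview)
Your proposal is correct and follows essentially the same route as the paper: establish the affine identity $\numSAT = m + \numLIN$ (hence $\OPT_{\mathrm{SAT}} = m + \OPT_{\mathrm{LIN}}$), then absorb the additive $O(\delta)m$ slack using $\OPT_{\mathrm{LIN}} = \Theta(m)$. The only difference is that the paper simply asserts ``we may assume $m^* = \Theta(m)$'' at the final step, whereas you make this explicit via the random-assignment bound $\OPT_{\mathrm{LIN}} \ge m/2$; your version is therefore slightly more self-contained, but the argument is the same.
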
 

\section{Discussion} \idan{sketch}
In this paper, we showed how subset advice can be incorporated to improve the running time of algorithms for $k$-SAT that currently achieve the best known performance in worst-case analysis. 
For the optimization problem MAX-SAT and its variants, we incorporated subset advice into any algorithm and showed that the approximation factor improves linearly with the advice parameter. We also proved that, assuming ETH and Gap-ETH, these are the best possible results for \threesat and \maxthreesat. Using label advice, we obtained near-optimal results for \maxtwosat instances where the average degree exceeds a threshold depending only on the amount of advice. This generalizes previous results for MAX-CUT and MAX-$2$-LIN. Open questions regarding the label advice include designing algorithms for \ksat, as well as  incorporating label advice into SDP-based methods to solve more general problems of \maxksat. An interesting direction for future work is to explore and compare different advice models, such as proving a formal separation between the label and subset advice models as it is plausible that the label advice model constitutes a weaker model. We would also like to explore a variation on the label advice model, where we are allowed to make a few queries to an oracle with non-persistent noise. 

\section*{Acknowledgments}

We thank Suprovat Ghoshal for helpful discussions. 
This research was supported in part by the National Science Foundation grant ECCS-2217023.

\newpage
\bibliographystyle{alpha}
\bibliography{refs}

\newpage
\appendix

\section{Omitted Proofs From Section \ref{sec: decision problem}}\label{apx: ommited proofs decision}
\PPSZthreeSAT*
\begin{proof}
Recall that $R_k(r)$ is the smallest nonnegative solution $R$ to $\Big(r+(1-r)R\Big)^{k-1} = R$.
Now we focus on the case that $k=3$ and solve for $R_3(r)$, 
\begin{align*}
    R_3(r)&=\begin{cases}
        \left(\frac{r}{1-r}\right)^2 & 0\le r\le \frac{1}{2}\\
        1 & \frac{1}{2}\le r\le1
    \end{cases}.
\end{align*}
Then we evaluate $\epsilon_3$,
\begin{align*}
    \epsilon_3&=\epsilon -  \int_{0}^{\epsilon}{R_3(r)dr}\\
    &=\epsilon - \left(\frac{1}{1-\epsilon}+2\ln{(1-\epsilon)}-1+\epsilon \right)\\
    &=-\frac{1}{1-\epsilon}-2\ln{(1-\epsilon)}+1.
\end{align*}

Now we apply Theorem \ref{thm: subset PPSZ kSAT}. Recall that without advice $R_3 = 2- 2\ln{2}$ and $c_3 =2^{ 2\ln{2} - 1 +o(1)}$, therefore
\begin{align*}
    c_3' &= c_3 \cdot 2^{- \epsilon_3}\\
         &= 2^{2\ln{2} - 1 +o(1)- \epsilon_3}\\
         &=2^{\left(\frac{\epsilon}{1-\epsilon}+2\ln{(2-2\epsilon)}-1+o(1)\right)}.
\end{align*}
\end{proof}

\thmHardnessSat*
\begin{proof}
Let $\epsilon_0<\delta$, where $\delta$ is the constant in the ETH. Suppose there is a polynomial time algorithm that solves \threesat given subset advice with parameter $\epsilon\le\epsilon_0$. Fix a subset of size $\epsilon n$, we can simulate a subset advice by enumerating all possible assignments then run the algorithm, thereby solving \threesat in time $O(2^{\epsilon n}\mathrm{poly}(n))\le O(2^{\delta n})$, contradicting ETH.
\end{proof}

\section{Omitted Proofs From Section \ref{sec: optimization problem}}\label{apx: ommited proofs optimization}
\thmHardnessMaxSat*
\begin{proof}
Let $\epsilon_0<\delta$ in the ETH. Suppose there is a polynomial time algorithm that given a fully satisfiable instance of \maxthreesat and subset advice with parameter $\epsilon\le\epsilon_0$, returns a solution satisfying at least a $(1-\gamma)$-fraction of the clauses. Given input $\phi$, we fix a subset of size $\epsilon n$ and simulate a subset advice by enumerating all possible assignments, then run the algorithm. If $\phi$ is fully satisfiable, eventually we will get a solution satisfying at least a $(1-\gamma)$-fraction of the clauses; on the other hand, if $\mathrm{sat}(\phi)<1-\gamma$, no solution can satisfy a $(1-\gamma)$-fraction, thereby we can distinguish between the two cases in time $O(2^{\epsilon n}\mathrm{poly}(n))\le O(2^{\delta n})$, contradicting Gap-ETH.
\end{proof}

\thmMaxTwoSatLabel*
\begin{proof} 
The proof follows from a chain of inequalities in expectation over the randomness the advice.
\begin{enumerate}
    \item $f(\hat{y}) \le f(y)$, where $y$ is the QP solution, and $\hat{y}$ is the rounding of $y$. This follows from the same argument as in \cite{ghoshal2024constraint}.
    \item $f(y)\le F(y,\tilde{y})$, where $F(y,\tilde{y})$ is the minimum value of the QP attained at solution $y$. This step follows from Lemma \ref{lemma: f(y) to F(y)}, which is an extension of Claim 3.2 from \cite{ghoshal2024constraint}.
    \item $F(y,\tilde{y}) \le F(y^*,\tilde{y})$, where $y^*$ is the vector corresponding to the ground-truth optimal solution $x^*$. This step directly follows from the optimality of the QP solution $y$, and the fact that $y^*$ is feasible to the QP.
    \item $ F(y^*,\tilde{y}) \le f(y^*) + m\cdot O(\frac{1}{\epsilon \sqrt{\Delta}})$. This step follows from Lemma \ref{lemma: F(y*) to f(y*)}, which is an extension of Lemma 3.3 from \cite{ghoshal2024constraint}.
\end{enumerate}
Putting these together, and we may assume $\OPT \ge \frac{3}{4}m$,
\begin{align*}
    f(\hat{y})&\le f(y^*) + m\cdot O\left(\frac{1}{\epsilon \sqrt{\Delta}}\right), \\
    \numSAT(\hat{y})&= \frac{3}{4}m - \frac{1}{8}f(\hat{y}) \\
    &\ge \frac{3}{4}m - \frac{1}{8}f(y^*) -  m\cdot O\left(\frac{1}{\epsilon \sqrt{\Delta}}\right)\\
    &\ge \OPT - \OPT\cdot  O\left(\frac{1}{\epsilon \sqrt{\Delta}}\right)\\
    &\ge \OPT \cdot \left(1-O\left(\frac{1}{\epsilon \sqrt{\Delta}}\right)\right) .
\end{align*}
\end{proof}

\begin{restatable}[]{lemma}{lemmafToF}\label{lemma: f(y) to F(y)}
    For $y \in [-1,1]^{2n+1}$, $f(y)\le F(y,\tilde{y})$.
\end{restatable}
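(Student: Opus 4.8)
The plan is to compare $F(y,\tilde y)$ and $f(y)$ directly, exploiting the fact that both are built from the same linear term. First I would observe that both expressions contain the identical piece $2\sum_{i\in[2n]}y_0y_id_i$, which therefore cancels when we form the difference. Writing $z := y - \tilde y/\epsilon$ for brevity, what remains is
\[
F(y,\tilde y) - f(y) = \ip{y, A\tilde y/\epsilon} + \onenorm{Az} - \ip{y, Ay} = \onenorm{Az} - \ip{y, Az},
\]
using $\ip{y,Ay} - \ip{y,A\tilde y/\epsilon} = \ip{y, A(y-\tilde y/\epsilon)} = \ip{y, Az}$. Thus the lemma reduces to the single inequality $\ip{y, Az} \le \onenorm{Az}$.

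The key (and essentially only) step is then a one-line bound that uses the box constraint $y\in[-1,1]^{2n+1}$, equivalently $\inftynorm{y}\le 1$. By H\"older's inequality (or simply bounding each coordinate),
\[
\ip{y, Az} = \sum_{i} y_i (Az)_i \le \sum_i \lrabs{(Az)_i} = \onenorm{Az},
\]
since $\lrabs{y_i}\le 1$ for every $i$. Substituting back gives $F(y,\tilde y) - f(y) = \onenorm{Az} - \ip{y,Az} \ge 0$, which is exactly the claim $f(y)\le F(y,\tilde y)$.

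I do not expect a genuine obstacle here: this is the \emph{linearization} step in the chain of inequalities for Theorem \ref{thm: label max2sat}, and its entire content is that the $\ell_1$ penalty $\onenorm{A(y-\tilde y/\epsilon)}$ was chosen precisely large enough to dominate the indefinite quadratic form $\ip{y,Ay}$ once it is replaced by the linear-in-$y$ surrogate $\ip{y, A\tilde y/\epsilon}$. The only point that needs care is the bookkeeping that the degree-weighted linear terms in $F$ and $f$ coincide exactly (they do by construction of the objective), so that nothing beyond the quadratic-versus-$\ell_1$ comparison is left; this mirrors Claim~3.2 of \cite{ghoshal2024constraint}, with the extra degree term carried along unchanged.
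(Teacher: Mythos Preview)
Your proof is correct and follows the same approach as the paper: after cancelling the common degree term, the paper also applies H\"older's inequality to bound $\langle y, A(y-\tilde y/\epsilon)\rangle \le \inftynorm{y}\cdot\onenorm{A(y-\tilde y/\epsilon)} \le \onenorm{A(y-\tilde y/\epsilon)}$ using $\inftynorm{y}\le 1$. The only cosmetic difference is that you subtract and show the difference is nonnegative, whereas the paper writes the chain of inequalities starting from $f(y)$ and arriving at $F(y,\tilde y)$.
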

\begin{proof}This Lemma is a modified version of Claim 3.2 \cite{ghoshal2024constraint}, we include a full proof for completeness.
    \begin{align*}
        f(y) &= 2\underset{i\in[2n]}{\sum}y_0y_id_i +\langle{y,Ay}\rangle \\
        &= 2\underset{i\in[2n]}{\sum}y_0y_id_i +\langle{y,A\tilde{y}/\epsilon}\rangle + \langle{y,A(y-\tilde{y}/\epsilon)}\rangle \\
        &\le 2\underset{i\in[2n]}{\sum}y_0y_id_i +\langle{y,A\tilde{y}/\epsilon}\rangle + \inftynorm{y}\cdot \onenorm{A (y - \tilde{y}/\epsilon)}\\
        & = 2\underset{i\in[2n]}{\sum}y_0y_id_i +\langle{y,A\tilde{y}/\epsilon}\rangle +\onenorm{A (y - \tilde{y}/\epsilon)}\\
        & = F(y,\tilde{y}),
    \end{align*}
where the inequality follows from Hölder's inequality.
\end{proof}

\begin{restatable}[]{lemma}{lemmaFtof}\label{lemma: F(y*) to f(y*)}
    $\expect{\tilde{y}}{F(y^*,\tilde{y})} \le f(y^*) + \frac{2}{\epsilon}\sqrt{mn}$. 
    \\Note that $\frac{2}{\epsilon}\sqrt{mn} = m\cdot O(\frac{1}{\epsilon \sqrt{\Delta}})$, following the definition of $\Delta = \frac{m}{n}$.
\end{restatable}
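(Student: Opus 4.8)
The plan is to bound $\expect{\tilde{y}}{F(y^*,\tilde{y})}$ by comparing it term-by-term with $f(y^*)$, exploiting the fact that $\tilde{y}/\epsilon$ is an unbiased estimator of $y^*$. Recall that $F(y^*,\tilde{y}) = 2\sum_{i\in[2n]}y_0^*y_i^*d_i + \langle y^*, A\tilde{y}/\epsilon\rangle + \onenorm{A(y^* - \tilde{y}/\epsilon)}$, while $f(y^*) = 2\sum_{i\in[2n]}y_0^*y_i^*d_i + \langle y^*, Ay^*\rangle$. The first (degree) term is identical in both expressions, so it cancels and we may focus on the remaining two. The key probabilistic observation is that for each literal coordinate $i$, the label advice satisfies $\expect{}{\tilde{y}_i} = \epsilon\, y_i^*$, hence $\expect{}{\tilde{y}_i/\epsilon} = y_i^*$; this is exactly why the scaling by $1/\epsilon$ appears in the algorithm. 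Consequently $\expect{\tilde{y}}{\langle y^*, A\tilde{y}/\epsilon\rangle} = \langle y^*, Ay^*\rangle$, matching the quadratic term of $f(y^*)$ in expectation.

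First I would establish that $\expect{\tilde{y}}{\langle y^*, A\tilde{y}/\epsilon\rangle} = \langle y^*, A y^*\rangle$ using linearity of expectation and the independence/unbiasedness of the coordinates of $\tilde{y}/\epsilon$ (being careful that the padded literal coordinates $y_{n+i} = -y_i$ are handled consistently, since the advice is defined on the $n$ original variables). After this cancellation, what remains to control is the penalty term $\expect{\tilde{y}}{\onenorm{A(y^* - \tilde{y}/\epsilon)}}$, and the goal reduces to showing this is at most $\frac{2}{\epsilon}\sqrt{mn}$. Writing $z := y^* - \tilde{y}/\epsilon$, each coordinate $z_i$ is a mean-zero random variable, and $\onenorm{Az} = \sum_i \lrabs{(Az)_i} = \sum_i \lrabs{\sum_j A_{ij} z_j}$.

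The main step is then to bound $\expect{}{\onenorm{Az}} = \sum_i \expect{}{\lrabs{\sum_j A_{ij}z_j}}$. For each row $i$, I would apply the inequality $\expect{}{\lrabs{W}} \le \sqrt{\expect{}{W^2}}$ (Jensen/Cauchy-Schwarz) to $W = \sum_j A_{ij}z_j$. Since the $z_j$ are independent and mean-zero, the cross terms vanish and $\expect{}{W^2} = \sum_j A_{ij}^2 \expect{}{z_j^2}$. Each $z_j = y_j^* - \tilde{y}_j/\epsilon$ has variance governed by the Bernoulli noise: a direct computation gives $\expect{}{z_j^2} = \frac{1-\epsilon^2}{\epsilon^2} \le \frac{1}{\epsilon^2}$. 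Because $A$ is a $0/1$ adjacency matrix, $\sum_j A_{ij}^2 = d_i$, so $\expect{}{W^2} \le d_i/\epsilon^2$ and $\expect{}{\lrabs{W}} \le \frac{1}{\epsilon}\sqrt{d_i}$. Summing over the $2n$ rows and applying Cauchy-Schwarz on the outer sum, $\sum_i \sqrt{d_i} \le \sqrt{2n}\sqrt{\sum_i d_i} = \sqrt{2n}\sqrt{2m}$, since the total literal degree equals $2m$. This yields $\expect{}{\onenorm{Az}} \le \frac{1}{\epsilon}\cdot 2\sqrt{mn}$, exactly the claimed bound, and combining with the cancelled quadratic term completes the proof.

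The step I expect to be the main obstacle is the variance computation for $z_j$ and making sure the constants line up to give precisely $\frac{2}{\epsilon}\sqrt{mn}$ rather than a looser bound; in particular one must verify that $\expect{}{(\tilde{y}_j/\epsilon)^2}$ combined with the cross term $-2y_j^*\expect{}{\tilde{y}_j/\epsilon}/1$ collapses correctly, and that using $\expect{}{z_j^2}\le 1/\epsilon^2$ (dropping the $-\epsilon^2$) is what produces the clean constant. A secondary subtlety is the bookkeeping between variables and their padded negated literals, ensuring the noise on $x_i$ and $x_{n+i}$ is perfectly correlated (so $z_{n+i} = -z_i$) and that this does not inflate the row-variance estimate beyond $d_i/\epsilon^2$.
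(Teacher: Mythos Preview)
Your proposal is correct and follows essentially the same approach as the paper: both use $\expect{}{\tilde{y}/\epsilon}=y^*$ to cancel the quadratic term and then bound $\expect{}{\onenorm{Az}}$ via a Cauchy--Schwarz/Jensen argument together with the variance computation $\expect{}{z_j^2}=\frac{1-\epsilon^2}{\epsilon^2}$. The only cosmetic difference is the order of the norm inequalities: the paper first applies $\onenorm{Az}\le\sqrt{2n}\,\twonorm{Az}$ and then Jensen on $\twonorm{Az}$ (using $\fnorm{A}^2=2m$), whereas you apply Jensen row-by-row and then Cauchy--Schwarz on $\sum_i\sqrt{d_i}$; both routes land on $\frac{2}{\epsilon}\sqrt{mn}$, and your flagged subtlety about the anticorrelation $z_{n+i}=-z_i$ is harmless since it can only decrease the row variances.
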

\begin{proof}   
Let $z =  y^* - \tilde{y}/\epsilon$. First we calculate the mean and variance of $\tilde{y_i}$ and $z_i$. 
\begin{align*}
    &\expect{}{\tilde{y_i}} = \frac{1+\epsilon}{2} y_i^* + \frac{1-\epsilon}{2}(-y_i^*) =  \epsilon y_i^*,\\
    &\expect{}{z_i} = 0,\\
    &\expect{}{z_i^2} = \frac{1+\epsilon}{2}(y_i^* - y_i^*/\epsilon )^2 + \frac{1-\epsilon}{2}( y_i^* + y_i^*/\epsilon)^2 =\frac{1-\epsilon^2}{\epsilon^2}.
\end{align*}
Consider
\begin{align*}
    \expect{\tilde{y}}{F(y^*,\tilde{y})} &= 2\underset{i\in[2n]}{\sum}y^*_0y^*_id_i +\expect{\tilde{y}}{\langle y^*,A \tilde{y}/\epsilon\rangle} +  \expect{\tilde{y}}{\onenorm{Az}}\\
    &=2\underset{i\in[2n]}{\sum}y^*_0y^*_id_i +\langle{y^*,A\expect{\tilde{y}}{\tilde{y}/\epsilon}}\rangle+  \expect{\tilde{y}}{\onenorm{Az}}\\
    &=2\underset{i\in[2n]}{\sum}y^*_0y_i^*d_i +\langle{y^*,Ay^*}\rangle+  \expect{\tilde{y}}{\onenorm{Az}}\\
    &=f(y^*) +  \expect{\tilde{y}}{\onenorm{Az}}
\end{align*}

According to the proof of Lemma 3.3 \cite{ghoshal2024constraint}, the expectation term
\begin{align*}
    \expect{\tilde{y}}{\onenorm{Az}} 
    &\le \sqrt{2n} \ \expect{\tilde{y}}{\twonorm{Az}}\\
    &\le \sqrt{2n} \sqrt{\expect{}{z_i^2}\cdot \fnorm{A}^2 } \\
    &\le \sqrt{2n}\sqrt{\frac{1-\epsilon^2}{\epsilon^2}}\fnorm{A}\\
    &\le \frac{2}{\epsilon}\sqrt{mn},
\end{align*}
since $\fnorm{A}^2 = 2m$.
\end{proof}

\propLinToSat*
\begin{proof}
Recall the reduction:
given a MAX-$2$-LIN instance on $n$ variables and $m$ constraints, we can reduce it to \maxtwosat on $n$ variables and $M=2m$ clauses in the following way:
\begin{itemize}
    \item For MAX-$2$-LIN constraint $x_i\cdot x_j = +1$, add $2$ clauses in \maxtwosat: $(x_i \lor \bar{x_j}) \land (\bar{x_i} \lor x_j)$;
    \item For MAX-$2$-LIN constraint $x_i\cdot x_j = -1$, add $2$ clauses in \maxtwosat: $(x_i \lor x_j) \land (\bar{x_i} \lor \bar{x_j})$.
\end{itemize}
Given a solution $x$ to the \maxtwosat problem, we can translate it to a solution $z$ to the MAX-$2$-LIN problem by setting $z_i = 1$ if $x_i=1$, and $z_i= -1$ if $x_i=0$.

Notice that for each of the constraint and its $2$ corresponding clauses, the constraint is satisfied if and only if both clauses are true, and the constraint is not satisfied if and only if exactly one of the clauses is true. Therefore, $\numLIN = \numSAT - m$. 
Denote the value of an optimal solution to MAX-$2$-LIN as $m^*$, and the value of an optimal solution to \maxtwosat as $M^*$, and notice that $m^* = M^* - m$.
Given an assignment $x$ that satisfies $\left(1-O(\delta)\right)\cdot M^*$ clauses in \maxtwosat, the number of constraints in MAX-$2$-LIN satisfied by corresponding $y$ is
\begin{align*}
    \numLIN &= \numSAT - m\\
    &=\left(1-O(\delta)\right)\cdot M^* - m\\
    &=\left(1-O(\delta)\right)\cdot(m^*+m) - m\\
    &=\left(1-O(\delta)\right) \cdot m^* - O(\delta)\cdot m\\
    &=\left(1-O(\delta)\right) \cdot m^*, \text{ since we may assume } m^* = \Theta(m).
\end{align*}

Given label advice, our Theorem \ref{thm: label max2sat} is a generalization of Theorem 1.4 from \cite{ghoshal2024constraint}. By mapping $\{-1,1\}$ to $\{0,1 \}$, an advice to the MAX-$2$-LIN instance can be translated as an advice to the \maxtwosat instance with the same $\epsilon$. Furthermore, the degree of the MAX-$2$-LIN instance is $\Delta = \frac{2m}{n}$ which is equal to the degree of the \maxtwosat instance $\frac{M}{n}$, so the same average degree assumption applies to both problems as well. Therefore we recover the solution to MAX-$2$-LIN and generalize the previous results from the ``symmetric" constraint satisfaction problems to ``non-symmetric" SAT problems.
\end{proof}

\end{document}